\newcommand{\R}{\mathbb{R}}
\newcommand{\C}{\mathbb{C}}
\newcommand{\D}{{\cal D}}
\renewcommand{\S}{{\cal S}}
\newcommand{\B}{{\cal B}}
\newcommand{\K}{{\cal K}}
\newcommand{\F}{{\cal F}}
\newcommand{\pr}{{\rm pr}}
\newcommand{\la}{\lambda}
\newcommand{\h}{{\cal H}}
\newcommand{\Cyl}{{\rm Cyl}}
\newcommand{\id}{{\rm id}}	
\newcommand{\ot}{\otimes}
\newcommand{\bld}[1]{\boldsymbol{#1}}
\newcommand{\tp}{\tilde{p}}
\newcounter{mnotecount}[section]
\newtheorem{thr}{Theorem}
\newtheorem{lm}[thr]{Lemma}
\newtheorem{df}[thr]{Definition}
\newtheorem{pro}[thr]{Proposition}
\numberwithin{equation}{section}
\numberwithin{thr}{section}
\begin{document}

\title{Kinematic projective quantum states for Loop Quantum Gravity coupled to tensor fields\footnote{This is an author-created copyedited version of an article accepted for publication in Journal of Mathematical Physics. The definitive publisher authenticated version is available online at http://dx.doi.org/10.1063/1.4980014.}}
\author{Andrzej Oko{\l}\'ow}
\date{April 20, 2017}

\maketitle
\begin{center}
{\it  Institute of Theoretical Physics, Warsaw University\\ ul. Pasteura 5, 02-093 Warsaw, Poland\smallskip\\
oko@fuw.edu.pl}
\end{center}
\medskip

\begin{abstract}
We present a construction of kinematic quantum states for theories of tensor fields of an arbitrary sort. The construction is based on projective techniques by Kijowski. Applying projective quantum states for Loop Quantum Gravity obtained by Lan\'ery and Thiemann we construct quantum states for LQG coupled to tensor fields.   
\end{abstract}

\section{Introduction}

In the late 70's of the last century Kijowski \cite{kpt} introduced a method of constructing quantum states for field theories which is based on some projective techniques. Originally Kijowski applied his construction to theories of linear phase spaces like e.g. a scalar field theory. Few years ago this construction was generalized \cite{q-nonl,q-stat} to a certain class of theories of non-linear phase spaces like the Teleparallel Equivalent of General Relativity and recently Lan\'ery and Thiemann  generalized it even further \cite{proj-lt-I,proj-lt-II,proj-lt-III} in such a way that they were able to obtain by means of this method a new space of quantum states for Loop Quantum Gravity (LQG) \cite{proj-lqg-I}. Although nowadays the applicability of the Kijowski's method is quite broad it should be emphasized that this method neglects the dynamics of a theory and possible constraints on its phase space  and consequently the resulting quantum states are kinematic only. 

In this paper we will apply the Kijowski's method to construct kinematic quantum states for any tensor field theory and for LQG coupled to such a theory. A motivation for these constructions is the following.

Among important and interesting models of quantum gravity are those obtained by means of LQG methods applied to relational Dirac observables \cite{rov,ditt-prt,ditt} defined by a coupling of gravitational field and a matter field---this class of models encompasses both ones with reduced degrees of freedom (d.o.f.) applied in Loop Quantum Cosmology (see \cite{math-lqc} and reference therein) and ones with all d.o.f. \cite{kg-tt,grav-q}. The new space $\D_{\rm LQG}$ of quantum states for LQG constructed by Lan\'ery and Thiemann differs significantly from those used in the models presented in  \cite{kg-tt,grav-q}. Therefore it would be interesting to use the space $\D_{\rm LQG}$  to build  a model of LQG coupled to a matter field. 

To construct a space of quantum states for such a model of LQG we will restrict ourselves to tensor fields as matter fields and will construct by means of the Kijowski's method  a space of quantum states for an arbitrary tensor field theory---this space will be constructed in a background independent manner to fit the construction of $\D_{\rm LQG}$. Then we will have two field theories together with spaces of quantum states obtained for each theory by the Kijowski's method and will have to find a way to combine these two spaces into one which could serve for a theory defined by coupling the original two theories. We will find a general solution to this problem of combining two such spaces into one and will apply it to obtain the desired space of quantum states for LQG coupled to a tensor field theory.

This paper is organized as follows. Section 2 contains preliminaries. In Section 3 we will construct quantum states for an arbitrary tensor field theory, which in Section 4 will be combined with the space $\D_{\rm LQG}$. In Section 5 obtained results are shortly summarized.   

\section{Preliminaries}

\subsection{The projective construction of spaces of quantum states \label{out}}

Here we are going to outline the Kijowski's construction of quantum states---this outline is based on the newest formulation of the construction presented in \cite{mod-proj}. 

The point of departure for the construction is a phase space of a field theory which is usually an infinite dimensional space. One begins the construction by defining upon the phase space a family $\Lambda$ of finite physical systems---each such a system is obtained by a reduction of the infinite number of d.o.f. of the phase space to a finite one. The family $\Lambda$ should be defined in a very special way. Firstly, the finite systems constituting the family are supposed to represent altogether all d.o.f. of the original phase space. Secondly, it should be possible to equip the family with a directing relation $\geq$ such that $\la'\geq\la$ if $\la$ is a {\em subsystem} of $\la'$. Thirdly, one should be able to associate with every system $\la\in\Lambda$ a Hilbert space $\h_{\la}$ representing quantum states of the system in such a way that the family $\{\h_\la\}_{\la\in\Lambda}$ is extendable to a richer structure 
\begin{equation}
\Big(\Lambda,\h_\lambda,\tilde{\h}_{\lambda'\lambda},\Phi_{\lambda'\lambda},\Phi_{\lambda''\lambda'\lambda}\Big)
\label{quin}
\end{equation}
called a {\em  family of factorized Hilbert spaces} and defined as follows \cite{proj-lt-II}:
\begin{df}
A quintuplet \eqref{quin} is called a family of factorized Hilbert spaces if
\begin{enumerate}
\item $\Lambda$ is a directed set,
\item for every $\lambda\in\Lambda$ $\h_\lambda$ is a separable Hilbert space,
\item for every $\lambda'\geq \lambda$ $\tilde{\h}_{\lambda'\lambda}$ is a Hilbert space, and 
\begin{equation}
\Phi_{\lambda'\lambda}:\h_{\lambda'}\to\tilde{\h}_{\lambda'\lambda}\ot\h_\lambda
\label{Phi}
\end{equation}
is a Hilbert space isomorphism; moreover $\dim\tilde{\h}_{\lambda\lambda}=1$ and $\Phi_{\lambda\lambda}$ is trivial \footnote{Assume that $\h'$ is a one dimensional Hilbert space.  A Hilbert space isomorphism $\Phi:\h\to\h'\ot\h$ is trivial if there exists a normed element $e$ of $\h'$ such that $\Phi(\psi)=e\ot\psi$.}.      
\item for every $\lambda''\geq\lambda'\geq \lambda$ 
\[
\Phi_{\lambda''\lambda'\lambda}:\tilde{\h}_{\lambda''\lambda}\to\tilde{\h}_{\lambda''\lambda'}\ot\tilde{\h}_{\lambda'\lambda}
\]
is a Hilbert space isomorphism such that the following diagram 
\begin{equation}
\begin{CD}
\h_{\lambda''} @>\Phi_{\lambda''\lambda}>> \tilde{\h}_{\lambda''\lambda}\ot\h_\lambda\\
@VV\Phi_{\lambda''\lambda'}V       @VV\Phi_{\lambda'' \lambda'\lambda}\ot\id V \\
\tilde{\h}_{\lambda''\lambda'}\ot\h_{\lambda'}@>\id\ot\Phi_{\lambda'\lambda}>>  \tilde{\h}_{\lambda''\lambda'}\ot\tilde{\h}_{\lambda'\lambda}\ot\h_\lambda 
\end{CD}
\label{diagram}
\end{equation}
is commutative; moreover if $\lambda''=\lambda'$ or $\lambda'=\lambda$ then $\Phi_{\lambda''\lambda'\lambda}$ is trivial. 
\end{enumerate}
\label{ffHs}
\end{df}

Once a family of factorized Hilbert spaces is obtained one associates with every $\h_\la$ a $C^*$-algebra $\B_\la$ of all bounded operators on the Hilbert space. Then for every $\la'\geq\la$ the following map \cite{proj-lt-II} 
\[
\B_\lambda\ni a\mapsto \iota_{\lambda'\lambda}(a):=\Phi^{-1}_{\la'\la}\circ(\mathbf{1}_{\la'\la}\ot a)\circ\Phi_{\la'\la}\in\B_{\la'},
\]
where $\mathbf{1}_{\la'\la}$ is the identity operator on $\tilde{\h}_{\la'\la}$, is an injective $*$-homomorphism. For every triplet $\la''\geq\la'\geq\la$ the corresponding $*$-homomorphisms satisfy \cite{proj-lt-II}
\begin{equation}
\iota_{\la''\la'}\circ\iota_{\la'\la}=\iota_{\la''\la}
\label{iii}
\end{equation}
which means that a family $\{\B_\la,\iota_{\la'\la}\}_{\la\in\Lambda}$ is an {\em inductive family} of $C^*$-algebras.

Let $\S_\la$ be the set of all algebraic states on the $C^*$-algebra $\B_\la$. Then for every pair $\la'\geq \la$ one defines
\[
\S_{\la'}\ni\omega\mapsto\Pi_{\la\la'}(\omega):=\iota^*_{\la'\la}(\omega)\in \S_\la,
\]
where $\iota^*_{\la'\la}$ denotes a pull-back of states in $\S_{\la'}$ given by the $*$-homomorphism $\iota_{\la'\la}$. By virtue of \eqref{iii} for every triplet $\la''\geq\la'\geq\la$
\[
\Pi_{\la\la'}\circ\Pi_{\la'\la''}=\Pi_{\la\la''}
\]
and consequently, $\{\S_\la,\Pi_{\la\la'}\}_{\la\in\Lambda}$ is a {\em projective family}.

Finally, one defines a space $\S$ of quantum states for the field theory as {\em the projective limit} of the family $\{\S_\la,\Pi_{\la\la'}\}_{\la\in\Lambda}$---therefore the elements of $\S$ will be called {\em projective quantum states}. The limit $\S$ coincides with the set of all algebraic states on a $C^*$-algebra $\B$ defined as {\em the inductive limit} of the family $\{\B_\la,\iota_{\la'\la}\}_{\la\in\Lambda}$ \cite{mod-proj}. On the other hand, the algebra $\B$ is interpreted as an algebra of quantum observables of the theory \cite{kpt}.   

Each projective quantum state $s\in\S$ being an element of the projective limit is a special net $\{s_\la\}_{\la\in\Lambda}$ of states such that $s_\la\in\S_\la$. This means that the state $s$ contains information about quantum d.o.f. of every finite system $\la\in\Lambda$. It is clear that in reality we are not able to know exactly the state $s$, we are able to know exactly at most a state  $s_\la$ for some $\lambda$. Since $\la$ is a {\em finite} physical system information about its quantum states should be in principle available by means of an experiment. We can thus think of each system $\la$ as of a system constructed from d.o.f. which can be measured in an experiment. According to the original idea by Kijowski \cite{kpt} the state $s_\la$ should be treated as an experimentally available {\em approximation} of the state $s$. Consequently the whole Kijowski's construction can be seen as a method of defining a consistent family $\Lambda$ of experiments each of them provides us with an approximate knowledge about a state $s$ of a quantized field theory.           

Let us finally note that in the earlier papers \cite{kpt,q-nonl,q-stat}, \cite{proj-lt-II} and \cite{proj-lqg-I} a slightly different construction of  projective quantum states was used. That construction also requires a family of factorized Hilbert spaces to be obtained, but with every finite system ${\la}$ one does not associates  the space $\S_{\la}$ of all states on $\B_{\la}$ but one does a {\em proper} subset $\D_{{\la}}$ of $\S_{\la}$ which consists of all {\em normal states} on this algebra (or equivalently which consists of all {\em density operators} on the Hilbert space $\h_{{\la}}$). The spaces of normal states together with maps
\[
\{\pi_{\la\la'}:=\Pi_{\la\la'}\big|_{\D_{\la'}}\}
\]
form a projective family and in the previous papers spaces of projective quantum states were defined as projective limits of families of this sort. However, as noted in \cite{sl-phd} the construction using normal states may be flawed in some cases and therefore it was modified in \cite{mod-proj} to remove the possibility of the flaw and this modified construction is used in the present paper.

\subsection{A construction of a family of factorized Hilbert spaces \label{ffH-sec}}

Let us note that the task of constructing a space $\S$ of projective quantum states for a field theory reduces in fact to the task of constructing a family of factorized Hilbert spaces---this is because once such a family is obtained the further steps leading to the space $\S$ are straightforward. On the other hand, constructing a family of factorized Hilbert spaces for a field theory may be difficult and involved---see e.g. \cite{proj-lqg-I}. Here we will present shortly results obtained in \cite{q-nonl,mod-proj} which reduce the task of constructing a family of factorized Hilbert spaces to the task of constructing  a special directed set $\Lambda$ of finite physical systems. 

\paragraph{Phase space} Assume that a field theory possesses a phase space $P\times Q$, where $Q$ is a space of configurational variables and $P$ is a space of conjugate momenta---elements of $Q$ and elements of $P$ are to be understood as (collections of) fields on a manifold $\Sigma$ (usually $\Sigma$ is a three-dimensional manifold representing a spatial slice of a spacetime). 

\paragraph{Elementary d.o.f.} The first step of the construction of a family of factorized Hilbert spaces is the choice of so called {\em elementary degrees of freedom}: we say that a set ${\cal K}$ of real functions defined on $Q$ is a set of {\em configurational elementary d.o.f.} if the functions separate points of $Q$. An element of $\cal K$, that is, a configurational elementary d.o.f. will be usually denoted by $\kappa$  possibly with some indices. Similarly, we say that a set ${\cal F}$ of real functions defined on $P$ is a set of {\em momentum elementary d.o.f.} if the functions separate points of $P$. An element of $\cal F$, that is, a momentum elementary d.o.f. will be usually denoted by $\varphi$ possibly with some indices.    


\paragraph{Independent configurational d.o.f.} Let $K=\{\kappa_1,\ldots,\kappa_N\}$ be a finite set of configurational elementary d.o.f.. It defines on $Q$ an equivalence relation $\sim_K$: if $q,q'\in Q$ then $q\sim_K q'$ if for every $\kappa_\alpha\in K$ 
\[
\kappa_\alpha(q)=\kappa_\alpha(q').
\]         
Let $Q_K$ denote the space of equivalence classes of this relation:
\[
Q_K=Q\big/\!\sim_K
\] 
and let $\pr_K:Q\to Q_K$ be the canonical projection:
\[
\pr_K(q)=[q],
\]
where $[q]$ denotes the equivalence class of $q$. There exists a natural\footnote{The map $\tilde{K}$ is natural modulo the ordering of elements of $K$. Since any ordering is equally good for our purposes we will neglect this subtlety.} injective map $\tilde{K}:Q_K\to\R^N$ defined as follows:
\begin{equation}
Q_K\ni [q]\mapsto\tilde{K}([q]):=(\kappa_1(q),\ldots,\kappa_N(q))\in\R^N.
\label{k-inj}
\end{equation}
We will say that $K$ is a set of {\em independent} d.o.f. if the image of $\tilde{K}$ is an open subset of $\R^N$. 

If $K$ is a set of independent d.o.f. then the map $\tilde{K}$ can be used to pull-back the differential structure of the set $\tilde{K}(Q_K)$ onto $Q_K$. The differential manifold $Q_K$ will be called {\em reduced configuration space}. Note that the map $\tilde{K}$ defines a global coordinate frame $(x_\alpha)$ on the reduced configuration space $Q_K$:
\begin{equation}
Q_K\ni [q]\mapsto (x_1([q]),\ldots x_N([q])):=\tilde{K}([q])\in \R^N.
\label{lin-coor}
\end{equation}

We assume that the set $\cal K$ is chosen in such a way that if for two sets $K,K'$ of independent d.o.f. $Q_{K}=Q_{K'}$ then both maps $\tilde{K},\tilde{K}'$ define the same differential structure on $Q_{K}=Q_{K'}$.            

\paragraph{Cylindrical functions} Let $K$ be a set of independent d.o.f. and $Q_K$ a reduced configuration space. We say that a function $\Psi:Q\to \C$ is a {\em cylindrical function compatible with} $K$ if
\[
\Psi=\pr^*_K\psi
\]   
for a smooth function $\psi:Q_K\to\C$. We will denote by $\Cyl$ a complex linear space of functions on $Q$ spanned by all possible cylindrical functions.   

\paragraph{Momentum operators} We assume that every momentum d.o.f. $\varphi\in{\cal F}$ defines a ``{\em momentum operator}'' $\hat{\varphi}$ on $\Cyl$ by means of the Poisson bracket on $P\times Q$:
\[
\Cyl\ni\Psi\mapsto\hat{\varphi}\Psi:=\{\varphi,\Psi\}\in\Cyl
\]   
(elementary d.o.f. may be chosen in such a way that the Poisson bracket above is ill defined but even then it may be possible to obtain a well defined operator $\hat{\varphi}$ by means of a suitable regularization procedure---for such an example see \cite{q-nonl,acz}). Clearly, $\hat{\varphi}$ is a linear operator. Let $\hat{\cal F}$ be the real linear space spanned by all operators $\hat{\varphi}$:
\[
\hat{\cal F}={\rm span}_\R\{\ \hat{\varphi} \ | \ \varphi \in{\cal F} \  \}.
\]   
 
\paragraph{Directed set of finite systems} Let us denote by $\mathbf{K}$ the set of all sets of independent d.o.f. and by $\hat{\mathbf{F}}$ the set of all linear subspaces of $\hat{\cal F}$ of {\em finite dimension}. Suppose that $\Lambda$ is a subset of $\hat{\mathbf{F}}\times \mathbf{K}$ equipped with a relation $\geq$ such that $(\Lambda,\geq)$ is a directed set---this means that in this approach a finite physical system $\la\in\Lambda$ is just a pair $(\hat{F},K)$, where $K$ is a (finite) set of independent configurational d.o.f. (which defines the reduced configuration space $Q_K$) and $\hat{F}$ is a finite dimensional space of momentum operators. We assume that the elementary d.o.f. constituting the spaces ${\cal K}$ and $\cal F$ and the directed set $(\Lambda,\geq)$ are chosen in such a way that the following {\em Conditions} hold \cite{q-nonl}:
\begin{enumerate}
\item 
\begin{enumerate}
\item for each finite set $K_0$ of configurational elementary d.o.f.  there exists $(\hat{F},K)\in\Lambda$ such that each $\kappa\in K_0$ is a cylindrical function compatible with $K$; \label{k-Lambda}
\item for each finite set $F_0$ of momentum elementary d.o.f. there exists $(\hat{F},K)\in\Lambda$ such that $\hat{\varphi}\in\hat{F}$ for every $\varphi\in F_0$; \label{f-Lambda}
\end{enumerate}
\item \label{RN} 
if $(\hat{F},K)\in\Lambda$ then the image of the map $\tilde{K}$ given by \eqref{k-inj} is $\R^N$, where $N$ is the number of elements of $K$---in other words, $\tilde{K}$ is a bijection and consequently  
\[
Q_K\cong\R^N.
\] 
\item 
if $(\hat{F},K)\in\Lambda$, then 
\begin{enumerate}
\item for every $\hat{\varphi}\in \hat{\F}$ and for every cylindrical function $\Psi=\pr_K^*\psi$ compatible with $K=\{\kappa_1,\ldots,\kappa_N\}$ 
\[
\hat{\varphi}\Psi=\sum_{\alpha=1}^N\Big(\pr^*_K\partial_{x_\alpha}\psi\Big)\hat{\varphi}\kappa_\alpha,
\]   
where $\{\partial_{x_\alpha}\}$ are vector fields on $Q_K$ given by the global coordinate frame \eqref{lin-coor}; \label{comp-f} 
\item for every $\hat{\varphi}\in \hat{\F}$ and for every $\kappa\in K$ the cylindrical function $\hat{\varphi}\kappa$ is a real {\em constant} function on $Q$; \label{const}
\end{enumerate}
\item if $(\hat{F},K)\in\Lambda$ and $K=\{\kappa_{1},\ldots,\kappa_{N}\}$ then $\dim\hat{F}=N$; moreover, if $(\hat{\varphi}_1,\ldots,\hat{\varphi}_N)$ is a basis of $\hat{F}$ then an $N\times N$ matrix $G=(G_{\beta\alpha})$ of components
\begin{equation}
G_{\beta\alpha}:=\hat{\varphi}_\beta\kappa_\alpha
\label{Gij}
\end{equation}
is {\em non-degenerate}. \label{non-deg}
\item  if $(\hat{F},K'),(\hat{F},K)\in\Lambda$ and $Q_{K'}=Q_{K}$ then  $(\hat{F},K')\geq(\hat{F},K)$; \label{Q'=Q} 
\item if $(\hat{F}',K')\geq(\hat{F},K)$ then 
\begin{enumerate}
\item each d.o.f. in $K$ is {\em a linear combination} of d.o.f. in $K'$; \label{lin-comb}
\item $\hat{F}$ is {\em a linear subspace} of $\hat{F}'$. \label{FF'}
\end{enumerate} 
\end{enumerate} 
        
It was shown in \cite{mod-proj} that if these conditions are satisfied then there exists a family \eqref{quin} of factorized Hilbert spaces associated with the directed set $(\Lambda,\geq)$---this family is a natural extension of a family $\{\h_\la\}_{\la\in\Lambda}$ of Hilbert spaces defined as follows \cite{q-nonl}: given $\lambda=(\hat{F},K)\in \Lambda$ 
\[
\h_\lambda:=L^2(Q_K,d\mu_\lambda),
\] 
where $d\mu_\lambda$ is a Lebesgue measure on $Q_K\cong\R^N$ defined by the coordinates \eqref{lin-coor}. 


\subsection{A useful proposition \label{u-prop}}

Above we described the results of \cite{q-nonl,mod-proj} which reduce the task of constructing a family of factorized Hilbert spaces to the one of constructing the special directed set $(\Lambda,\geq)$. Besides that in \cite{q-nonl} we proved some quite general facts which can be used while constructing such a set for a phase space. Below we will present another proposition of this kind, which guarantees that a subset of $\hat{\mathbf{F}}\times \mathbf{K}$ built according to a pattern is naturally a directed set---in fact, this proposition is a generalization of two particular lemmas to be found in \cite{q-nonl} and \cite{q-stat}.   

Let $(\Gamma,\geq)$ be a directed set and let $\mathbf{K}_\Gamma$ be a subset of $\mathbf{K}$ elements of which are labeled by elements of $\Gamma$ i.e.
\[
\mathbf{K}_\Gamma=\{K_\gamma\}_{\gamma\in\Gamma},
\]   
where each $K_\gamma$ is a set of independent d.o.f.. 

We say that a pair $(\hat{F},K)\in\hat{\mathbf{F}}\times \mathbf{K}$, where $K=\{\kappa_1,\ldots,\kappa_N\}$ is {\em non-degenerate} if 
\begin{enumerate}
\item for every $\hat{\varphi}\in \hat{F}$ and every $\kappa_\alpha\in K$ the function $\hat{\varphi}\kappa_\alpha$ is a real {\em constant} function, 
\item the dimension of $\hat{F}$ is equal $N$,
\item for a basis $(\hat{\varphi}_1,\ldots,\hat{\varphi}_N)$ of $\hat{F}$ the matrix \eqref{Gij} is non-degenerate.
\end{enumerate}

Consider now a set $\Lambda$ consisting of all non-degenerate pairs $(\hat{F},K_{\gamma})\in \hat{\mathbf{F}}\times \mathbf{K}_\Gamma$, where $\gamma$ runs through $\Gamma$.  On $\Lambda$ there exists a natural relation $\geq$ defined as follows: we say that $\lambda'=(\hat{F}',K_{\gamma'})$ is greater or equal to $\lambda=(\hat{F},K_{\gamma})$,
\[
\lambda'\geq\lambda,
\]   
if 
\begin{align*}
\hat{F}'&\supset\hat{F},& \gamma'&\geq\gamma.
\end{align*}

\begin{pro}
Suppose that 
\begin{enumerate}
\item \label{gamma>N} for every $\gamma\in\Gamma$ and for every natural number $N$ there exists $\gamma'\in \Gamma$ such that $\gamma'\geq\gamma$ and the number of elements of $K_{\gamma'}$ is greater than $N$, 
\item \label{lin-comb-Kg} if $\gamma'\geq\gamma$ then each d.o.f. in $K_{\gamma}$ is a cylindrical function compatible with  $K_{\gamma'}$,
\item \label{F-Kg-ndeg} for every $\gamma\in\Gamma$ there exists $\hat{F}\in\mathbf{F}$ such that $(\hat{F},K_{\gamma})\in \Lambda$.
\end{enumerate}
Assume moreover that the set $(\Lambda,\geq)$ satisfies Conditions \ref{k-Lambda}, \ref{comp-f} and \ref{const} (presented in Section \ref{ffH-sec}). Then $(\Lambda,\geq)$ is a directed set.
\label{Lambda-dir}     
\end{pro}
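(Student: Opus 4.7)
The plan is to verify that $(\Lambda,\geq)$ satisfies the two defining conditions of a directed set: $\geq$ is a preorder on $\Lambda$, and any two elements have a common upper bound. The first is immediate: reflexivity follows from $\hat F \supset \hat F$ and $\gamma\geq\gamma$, and transitivity from the transitivity of $\supset$ on linear subspaces together with the transitivity of $\geq$ on $\Gamma$.

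For the upper bound property, let $\lambda_i=(\hat F_i,K_{\gamma_i})\in\Lambda$ for $i=1,2$. First I would use the directedness of $\Gamma$ together with assumption~\ref{gamma>N} to select $\gamma\in\Gamma$ with $\gamma\geq\gamma_1,\gamma_2$ and with $N:=|K_\gamma|\geq\dim(\hat F_1+\hat F_2)$. By assumption~\ref{F-Kg-ndeg} pick some $\hat F^{\star}$ with $(\hat F^{\star},K_\gamma)\in\Lambda$, so $\dim\hat F^{\star}=N$. Condition~\ref{const} applied to $(\hat F^{\star},K_\gamma)$ then guarantees that every element of $\hat{\cal F}$ acts on each $\kappa_\alpha^\gamma\in K_\gamma$ by a real constant, so the linear map
\[
T:\hat{\cal F}\to\R^N,\qquad T(\hat\varphi):=\bigl(\hat\varphi\,\kappa_1^\gamma,\ldots,\hat\varphi\,\kappa_N^\gamma\bigr),
\]
is well-defined, and $T|_{\hat F^{\star}}$ is a bijection.

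Once $T|_{\hat F_1+\hat F_2}$ is known to be injective, the rest is straightforward linear algebra: let $V:=T(\hat F_1+\hat F_2)\subset\R^N$, pick a complement $W$ of $V$ in $\R^N$, lift $W$ through $(T|_{\hat F^{\star}})^{-1}$ to $\hat F'\subset\hat F^{\star}$, and set $\hat F:=(\hat F_1+\hat F_2)+\hat F'$. Then $\dim\hat F=N$, $T|_{\hat F}$ is a bijection (so the matrix $G$ of \eqref{Gij} is non-degenerate), $\hat F$ acts on $K_\gamma$ by constants, and by construction $\hat F\supset\hat F_i$ and $\gamma\geq\gamma_i$. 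Hence $\lambda:=(\hat F,K_\gamma)\in\Lambda$ is the desired common upper bound.

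The main obstacle is therefore the injectivity of $T|_{\hat F_1+\hat F_2}$. For $\hat\varphi\in\hat F_1+\hat F_2$ with $T(\hat\varphi)=0$, Condition~\ref{comp-f} applied to $(\hat F^{\star},K_\gamma)$ forces $\hat\varphi\,\Psi=0$ for \emph{every} cylindrical function $\Psi$ compatible with $K_\gamma$; by assumption~\ref{lin-comb-Kg} this includes every $\kappa\in K_{\gamma_1}\cup K_{\gamma_2}$. Writing $\hat\varphi=\hat\varphi_1+\hat\varphi_2$ with $\hat\varphi_i\in\hat F_i$, and using Condition~\ref{const} for auxiliary pairs $(\hat F_0^{(i)},K_{\gamma_i})\in\Lambda$ produced by assumption~\ref{F-Kg-ndeg} to conclude that each $\hat\varphi_j\,\kappa_\beta^{\gamma_i}$ is a real constant, one obtains a system of constant linear equations relating $\hat\varphi_1$ and $\hat\varphi_2$; combining this with the non-degeneracy of each $(\hat F_i,K_{\gamma_i})$ one should force $\hat\varphi=0$, possibly after enlarging $\gamma$ further (again via assumption~\ref{gamma>N}) so that $K_\gamma$ separates $\hat F_1+\hat F_2$ under $T$. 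Making this separation argument precise—ensuring that the kernel of $T|_{\hat F_1+\hat F_2}$ can always be collapsed to zero by a sufficient refinement of $\gamma$—is the delicate technical heart of the proof.
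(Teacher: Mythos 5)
Your overall architecture matches the paper's: reduce the problem to showing that a spanning set of $\hat F_1+\hat F_2$ remains linearly independent when restricted to $K_\gamma$ (your injectivity of $T|_{\hat F_1+\hat F_2}$), then complete it to a non-degenerate pair by elementary linear algebra using the pair $(\hat F^{\star},K_\gamma)$ supplied by assumption~\ref{F-Kg-ndeg}. That second half is fine (the paper does the same thing with explicit row/column operations on the matrix $G^0$ instead of your complement-and-lift argument). But the first half --- which you yourself flag as ``the delicate technical heart'' --- is a genuine gap, and the route you sketch for it does not close. From $T(\hat\varphi)=0$ and Condition~\ref{comp-f} you correctly deduce $\hat\varphi\,\kappa=0$ for every $\kappa\in K_{\gamma_1}\cup K_{\gamma_2}$; but writing $\hat\varphi=\hat\varphi_1+\hat\varphi_2$ only yields $\hat\varphi_1\kappa=-\hat\varphi_2\kappa$ on each $K_{\gamma_i}$, not $\hat\varphi_i\kappa=0$, so the non-degeneracy of the individual pairs $(\hat F_i,K_{\gamma_i})$ cannot be invoked (and the decomposition $\hat\varphi=\hat\varphi_1+\hat\varphi_2$ need not even be unique when $\hat F_1\cap\hat F_2\neq\{0\}$). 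Nor does assumption~\ref{gamma>N} help: enlarging $\gamma$ only increases the \emph{number} of d.o.f., and nothing in assumptions \ref{gamma>N}--\ref{F-Kg-ndeg} alone guarantees that some refinement separates $\hat F_1+\hat F_2$.

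The missing ingredient is exactly the pair of imported facts the paper states just before its proof: Proposition~\ref{Lambda-pr} (which uses Condition~\ref{k-Lambda} --- a hypothesis your argument never touches, which is the telltale sign) produces \emph{some} $K_{\gamma_0}$ on which a chosen basis of $\hat F_1+\hat F_2$ is linearly independent, and Lemma~\ref{cyl-K} together with assumption~\ref{lin-comb-Kg} transports that linear independence to any $K_{\gamma''}$ with $\gamma''\geq\gamma_0,\gamma_1,\gamma_2$ chosen large enough via assumption~\ref{gamma>N}. With that substitution your proof becomes complete; without it, the central claim is asserted rather than proved.
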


To prove this proposition we will use some facts proven in \cite{q-nonl}. Consider a set $\bld{\Psi}\subset \Cyl$ and a finite set $\{\hat{\varphi}_1,\ldots,\hat{\varphi}_N\}\subset\hat{\F}$. We will say that the operators  $\{\hat{\varphi}_1,\ldots,\hat{\varphi}_N\}$ are {\em linearly independent on} $\bld{\Psi}$ if the operators restricted to $\bld{\Psi}$ are linearly independent. Now let us state the facts:

\begin{lm}
Let $\Cyl_K$ be the set of all cylindrical functions compatible with a set $K$ of independent d.o.f..  Assume that operators $\{\hat{\varphi}_1,\ldots,\hat{\varphi}_M\}\subset \hat{\F}$ act on elements of $\Cyl_K$ according to the formula in Condition \ref{comp-f} (Section \ref{ffH-sec}). If $\{\hat{\varphi}_1,\ldots,\hat{\varphi}_M\}\subset \hat{\F}$ are linearly independent on a subset $\bld{\Psi}$ of $\Cyl_K$ then they are linearly independent on $K$.       
\label{cyl-K}
\end{lm}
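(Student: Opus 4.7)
The plan is to prove the contrapositive: assuming that $\{\hat{\varphi}_1,\ldots,\hat{\varphi}_M\}$ are linearly dependent on $K$, I will deduce that they are linearly dependent on every subset of $\Cyl_K$, in particular on $\bld{\Psi}$.

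First I would unpack what the two notions of linear independence mean. ``Linearly independent on $K$'' must mean that the restrictions $\hat{\varphi}_i|_K$, viewed as maps from $K$ to the reals (recall Condition \ref{const}, which tells us $\hat{\varphi}_i\kappa_\alpha$ is a real constant function for each $\alpha$), are linearly independent. Equivalently, the $M$ vectors $v_i := (\hat{\varphi}_i\kappa_1,\ldots,\hat{\varphi}_i\kappa_N)\in\R^N$ are linearly independent. So the negation reads: there exist real constants $c_1,\ldots,c_M$, not all zero, with
\[
\sum_{i=1}^M c_i\,\hat{\varphi}_i\kappa_\alpha=0\qquad(\alpha=1,\ldots,N).
\]

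Next I would apply the formula from Condition \ref{comp-f} to an arbitrary $\Psi=\pr_K^*\psi\in\Cyl_K$. Taking the same linear combination and swapping the two finite sums gives
\[
\sum_{i=1}^M c_i\,\hat{\varphi}_i\Psi
=\sum_{\alpha=1}^N\bigl(\pr_K^*\partial_{x_\alpha}\psi\bigr)\sum_{i=1}^M c_i\,\hat{\varphi}_i\kappa_\alpha
=0,
\]
where the last equality uses the assumed dependence relation on $K$. Since $\Psi$ was an arbitrary element of $\Cyl_K$, this shows that $\sum_i c_i\hat{\varphi}_i$ annihilates all of $\Cyl_K$, and in particular every element of $\bld{\Psi}\subset\Cyl_K$. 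Hence $\{\hat{\varphi}_1,\ldots,\hat{\varphi}_M\}$ are linearly dependent on $\bld{\Psi}$, contradicting the hypothesis of the lemma.

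Since the formula in Condition \ref{comp-f} does essentially all the work, there is no real obstacle here; the only point demanding a bit of care is the interpretation of ``linearly independent on $K$.'' Once one pins down that this refers to the linear independence of the tuples $(\hat{\varphi}_i\kappa_1,\ldots,\hat{\varphi}_i\kappa_N)$ of constants produced by Conditions \ref{comp-f} and \ref{const}, the remainder is a one-line interchange of summations.
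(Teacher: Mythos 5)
Your proof is correct: the formula in Condition \ref{comp-f} shows that any linear dependence relation among the $\hat{\varphi}_i$ holding on $K$ propagates to all of $\Cyl_K$ and hence to any subset $\bld{\Psi}$, which is exactly the contrapositive of the lemma, and your reading of ``linearly independent on $K$'' as linear independence of the operators restricted to the subset $K\subset\Cyl_K$ is the intended one. Note that the paper itself gives no proof of this lemma --- it is imported from \cite{q-nonl} --- but your one-line interchange-of-sums argument is the natural (essentially the only) proof, with the appeal to Condition \ref{const} being a harmless convenience rather than a logical necessity.
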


\begin{pro}
Let $\Lambda$ be a subset of $\hat{\mathbf{F}}\times\mathbf{K}$ which satisfies Conditions \ref{k-Lambda} and \ref{comp-f} (Section \ref{ffH-sec}). Then for every finite set $\{\hat{\varphi}_1,\ldots,\hat{\varphi}_M\}\subset\hat{\cal F}$ of linearly independent operators there exists a pair $(\hat{F},K)\in\Lambda$ such that the operators are linearly independent on $K$.
\label{Lambda-pr}   
\end{pro}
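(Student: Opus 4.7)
The plan is to reduce the task in three steps to a direct application of Lemma \ref{cyl-K}.

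\emph{Step 1: localize to a finite witness set.} Since $\hat\varphi_1,\ldots,\hat\varphi_M$ are linearly independent as operators on $\Cyl$, the linear map $\R^M\ni c\mapsto\sum_i c_i\hat\varphi_i$ into $\mathrm{End}(\Cyl)$ is injective. I would produce inductively a finite set $\bld{\Psi}=\{\Psi_1,\ldots,\Psi_L\}\subset\Cyl$, with $L\leq M$, on which the operators are still linearly independent: start with any $\Psi_1$ such that some non-trivial combination $\sum c_i\hat\varphi_i$ does not annihilate it, and at each stage enlarge $\bld{\Psi}$ by a witness picked from the kernel of the already-chosen witnesses. The kernel dimensions then form a strictly decreasing sequence of subspaces of $\R^M$, so the procedure terminates.

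\emph{Step 2: fit the witness into a single $\Cyl_K$.} Because $\Cyl$ is spanned by cylindrical functions, each $\Psi_j$ can be written as a finite sum of functions compatible, respectively, with some finite sets of configurational elementary d.o.f.; let $K_0$ denote the union of all these finite sets (still finite). By Condition \ref{k-Lambda} applied to $K_0$, there exists $(\hat F,K)\in\Lambda$ such that every $\kappa\in K_0$ is cylindrical compatible with $K$, i.e.\ $\kappa=\pr^*_K f_\kappa$ for some smooth $f_\kappa:Q_K\to\R$. Substituting these expressions into the smooth representatives of the $\Psi_j$'s on the various $Q_{K_j}$'s shows that each $\Psi_j$ is itself cylindrical compatible with $K$, so $\bld{\Psi}\subset\Cyl_K$.

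\emph{Step 3: conclude via Condition \ref{comp-f} and Lemma \ref{cyl-K}.} Condition \ref{comp-f}, applied to the pair $(\hat F,K)\in\Lambda$ obtained in Step 2, asserts that every operator in $\hat{\F}$---in particular each $\hat\varphi_i$---acts on $\Cyl_K$ by the Leibniz-type formula quoted there. The hypotheses of Lemma \ref{cyl-K} are therefore satisfied with the finite subset $\bld{\Psi}\subset\Cyl_K$, and the lemma promotes the linear independence of $\hat\varphi_1,\ldots,\hat\varphi_M$ on $\bld{\Psi}$ to their linear independence on $K$, which is the claim.

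The one place that needs a genuine argument is Step 1: operator-linear-independence obviously supplies \emph{some} witness function for each non-trivial coefficient vector, but a priori it is not clear that finitely many witnesses suffice to separate \emph{all} non-trivial coefficient vectors simultaneously. The strictly-decreasing-kernel argument above disposes of this; the remainder of the proof is an essentially formal chain of invocations of Condition \ref{k-Lambda}, Condition \ref{comp-f}, and Lemma \ref{cyl-K}.
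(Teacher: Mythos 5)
Your argument is correct, but note that the paper itself does not prove this proposition: it is imported verbatim from reference \cite{q-nonl} as a known fact, so there is no in-paper proof to compare against. That said, your route is exactly the one the surrounding text is set up for --- Lemma \ref{cyl-K} is stated immediately alongside the proposition precisely so that linear independence on a finite subset of $\Cyl_K$ can be promoted to linear independence on $K$. Your Step 1 (the strictly decreasing chain of ``annihilator'' subspaces of $\R^M$, giving a witness set of at most $M$ cylindrical functions) is the one genuinely non-formal point, and you handle it correctly. Step 2 is also sound, though you could be slightly more explicit about why the composite $[q]_K\mapsto\psi\bigl(f_{\kappa_1}([q]_K),\ldots,f_{\kappa_N}([q]_K)\bigr)$ is a well-defined smooth function on $Q_K$: well-definedness holds because each coordinate $\kappa_\alpha$ of the original reduced space factors through $\pr_K$, and the image of $(f_{\kappa_1},\ldots,f_{\kappa_N})$ lands in the open set $\tilde{K}_j(Q_{K_j})$ on which $\psi$ is defined, so the composition is smooth; closure of $\Cyl_K$ under finite sums then gives $\Psi_j\in\Cyl_K$. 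Step 3 is a correct application of Condition \ref{comp-f} (which holds for all $\hat{\varphi}\in\hat{\F}$, not only those in $\hat{F}$, so it applies to your $\hat{\varphi}_i$) followed by Lemma \ref{cyl-K}. I see no gap.
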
 
\noindent To be able to use these facts in the proof of Proposition \ref{Lambda-dir} we assumed in the proposition that the set $(\Lambda,\geq)$ satisfies Conditions \ref{k-Lambda} and \ref{comp-f}. 

\begin{proof}[Proof of Proposition \ref{Lambda-dir}]

Clearly, the relation $\geq$ is transitive. Therefore we have to prove  only that for any $\lambda',\lambda\in\Lambda$ there exists $\lambda''\in\Lambda$ such that $\lambda''\geq\lambda',\lambda$. 

Let us fix $\lambda'=(\hat{F}',K_{{\gamma}'})$ and $\lambda=(\hat{F},K_{{\gamma}})$. We define $\hat{F}_0$ as a linear subspace of $\hat{\F}$ spanned by elements of $\hat{F}'\cup\hat{F}$ and choose a basis $(\hat{\varphi}_1,\ldots,\hat{\varphi}_M)$ of $\hat{F}_0$. Proposition \ref{Lambda-pr} and the definition of the set $\Lambda$ guarantee that there exists an element ${\gamma}_0\in\Gamma$ such that the operators $(\hat{\varphi}_1,\ldots,\hat{\varphi}_M)$ remain  linearly independent when restricted to $K_{{\gamma}_0}$. By virtue of Assumption \ref{gamma>N} of Proposition \ref{Lambda-dir} it is possible to choose an element  ${\gamma}''\in\Gamma$ such that $(i)$ ${\gamma}''\geq {\gamma}_0,{\gamma}',{\gamma}$ and $(ii)$ the number $N$ of elements of $K_{{\gamma}''}$ is greater than $\dim \hat{F}_0=M$. Since $\gamma''\geq \gamma_0$ each d.o.f. in $K_{\gamma_0}$ is a cylindrical function compatible with $K_{\gamma''}$ (this is ensured by Assumption \ref{lin-comb-Kg}) and according to Lemma \ref{cyl-K} the operators $(\hat{\varphi}_1,\ldots,\hat{\varphi}_M)$ are linearly independent on $K_{{\gamma}''}$.     

Consider now a matrix $G^0$ of components
\[
G^0_{\beta\alpha}:=\hat{\varphi}_{\beta}\kappa_{\alpha},
\]          
where $\{\kappa_1,\ldots,\kappa_{N}\}=K_{{\gamma}''}$---it follows from Condition \ref{const} that the components can be treated as real numbers. Clearly, $G^0$ is a matrix of $N$ columns and $M$ rows. Since the operators $(\hat{\varphi}_1,\ldots,\hat{\varphi}_M)$ are linearly independent on $K_{{\gamma}''}$ the rank of $G^0$ is equal $M<N$. By means of the following operations: $(i)$ multiplying a row of by a non-zero real number, $(ii)$ adding to a row a linear combination of other rows $(iii)$ reordering the rows and $(iv)$ reordering the columns one can obtain from $G^0$ a matrix $G^1$ of the following form
\[
G^1=
\begin{pmatrix}
\mathbf{1} & G' 
\end{pmatrix},
\]             
where $\mathbf{1}$ denotes $M\times M$ unit matrix and $G'$ does an $M\times(N-M)$ matrix. Note that the operations $(i)$--$(iii)$ used to transform $G^0$ to $G^1$  correspond to a transformation of the basis $(\hat{\varphi}_1,\ldots,\hat{\varphi}_M)$ to a basis $(\hat{\varphi}'_1,\ldots,\hat{\varphi}'_M)$ of $\hat{F}_0$, while the operation $(iv)$ corresponds to renumbering the d.o.f. in $K_{{\gamma}''}$: $\kappa_{\alpha}\mapsto \kappa'_{\alpha}:=\kappa_{\sigma(\alpha)}$, where $\sigma$ denotes a permutation of the sequence $(1,\ldots,N)$. Thus  
\[
G^1_{\beta\alpha}=\hat{\varphi}'_{\beta}\kappa'_{\alpha}.
\]

By virtue of Assumption \ref{F-Kg-ndeg} there exists $\hat{F}^0\in\hat{\mathbf{F}}$ such that $(\hat{F}^0,K_{\gamma''})\in\Lambda$ which means that the pair $(\hat{F}^0,K_{\gamma''})$ is non-degenerate. Therefore one can choose a basis $(\hat{\varphi}^0_1,\ldots,\hat{\varphi}^0_{N})$ of $\hat{F}^0$ such that 
\[
\hat{\varphi}^0_{\beta}\kappa'_{\alpha}=\delta_{\beta\alpha},
\]
where $\{\kappa'_1,\ldots,\kappa'_N\}=K_{\gamma''}$. Thus if
\[
\Big(\hat{\varphi}''_1,\ldots,\hat{\varphi}''_{N}\Big):=\Big(\hat{\varphi}'_1,\ldots,\hat{\varphi}'_{M},\hat{\varphi}^0_{M+1},\ldots,\hat{\varphi}^0_{N}\Big)
\]
then a $N\times N$ matrix $G=(G_{\beta\alpha})$ of components
\[
G_{\beta\alpha}:=\hat{\varphi}''_{\beta}\kappa'_{\alpha}
\] 
reads
\[
G=
\begin{pmatrix}
\mathbf{1} & G'\\
\mathbf{0} & \mathbf{1}' 
\end{pmatrix}
\]  
---here $\mathbf{0}$ denotes a zero $(N-M)\times M$ matrix, and $\mathbf{1}'$ does a unit $(N-M)\times(N-M)$ matrix. It is obvious that the matrix $G$ is non-degenerate which in particular allows us to conclude that the operators $(\hat{\varphi}''_1,\ldots,\hat{\varphi}''_{N})$ are linearly independent.

Let
\[
\hat{F}'':={\rm span}_{\mathbb{R}} \{\hat{\varphi}''_1,\ldots,\hat{\varphi}''_{N}\}\in\hat{\mathbf{F}}
\]
Then $\lambda'':=(\hat{F}'',K_{{\gamma}''})$ is the desired element of $\Lambda$ such that $\lambda''\geq\lambda',\lambda$.        

\end{proof}

\section{Projective quantum states for a theory of tensor fields \label{sec-tf}}

Using the method described in Section \ref{ffH-sec} we will construct below in a {\em background independent} manner a directed set $(\Lambda,\geq)$ for a theory of tensor fields of a particular sort---canonical variables of this theory will be symmetric tensor fields of type $\binom{0}{2}$ on a three dimensional manifold and corresponding conjugate momenta. The set $(\Lambda,\geq)$ obtained here will satisfy all Conditions listed in Section \ref{ffH-sec} which means that together with the set we will obtain a family of factorized Hilbert spaces and a space $\S$ of projective quantum states for the theory. 

Then we will show how to generalize this particular construction of the space $\S$ to a theory of tensor fields of an arbitrary sort.
 
\subsection{Projective quantum states for a theory of symmetric tensor fields of type $\binom{0}{2}$}

\paragraph{Phase space} Let $Q$ be the set of all symmetric tensor fields of type $\binom{0}{2}$ on a three dimensional manifold $\Sigma$, and $P$ the set of all symmetric  tensor densities of type $\binom{2}{0}$ and weight $1$. In a (local) coordinate system $(x^i)$ on $\Sigma$ elements $q\in Q$ and $\tilde{p}\in P$ are represented by components  
\begin{align*}
q_{ij}&=q_{ji},&\tp^{ij}&=\tp^{ji}.
\end{align*}

Let $F$ be a functional on $P\times Q$.  A functional derivative $\delta F/\delta g_{ij}$ is a {\em symmetric} tensor density of type $\binom{2}{0}$ and weight $1$ and a functional derivative $\delta F/\delta \tilde{p}^{ij}$ is a {\em symmetric} tensor field of type $\binom{0}{2}$ such that 
\[
\delta F=\int_{\Sigma} \frac{\delta F}{\delta q_{ij}}\delta q_{ij}+\frac{\delta F}{\delta \tp^{ij} }\delta \tilde{p}^{ij}.
\]

We define a Poisson bracket between two functionals $F,G$ on $P\times Q$ as follows:
\begin{equation}
\{F,G\}:=\int_{\Sigma}\frac{\delta F}{\delta q_{ij}}\frac{\delta G}{\delta \tp^{ij}}-\frac{\delta G}{\delta q_{ij}}\frac{\delta F}{\delta \tp^{ij} },
\label{poiss}
\end{equation}
which means that the variable $p$ is the momentum conjugate to $q$. The product $P\times Q$ is thus a phase space.    

\paragraph{Elementary d.o.f.} Perhaps the most simple and the most natural way to produce a real number from the variable $q$ is to evaluate it on a pair of tangent vectors. Thus for  $Y,Y'\in T_y\Sigma$ we define  
\[
Q\ni q\mapsto \kappa_{YY'}(q):=q(Y,Y')\in \R
\]
and will treat this function as a configurational elementary d.o.f.. We will say that $\kappa_{YY'}$ is supported at $y$. Obviously,
\[
\kappa_{YY'}=\kappa_{Y'Y}
\]
We denote by $\cal K$ the set of all configurational  d.o.f.:
\[
{\cal K}=\{\ \kappa_{YY'} \ | \ Y,Y'\in T_y\Sigma,\ y\in\Sigma \ \}.
\]  

Let $\omega,\omega'$ be one-forms on $\Sigma$ of compact support. Then 
\[
\tp(\omega,\omega'):=\tp^{ij}\omega_i\omega'_j
\]  
is a scalar density on $\Sigma$ which can be naturally integrated over the manifold:
\begin{equation}
P\ni \tp\mapsto \varphi_{\omega\omega'}(\tp):=\int_\Sigma\tp(\omega,\omega')\in \R.
\label{phi-omom}
\end{equation}
If the function $\varphi_{\omega\omega'}$ is non-zero then we will treat it as a momentum elementary d.o.f.. Clearly,
\[
\varphi_{\omega\omega'}=\varphi_{\omega'\omega}.
\]
The set of all momentum elementary d.o.f. given by all one-forms $\{\omega,\omega'\}$ on $\Sigma$ of compact support will be denoted by $\cal F$.   

Let us emphasize that in the definition of $\K$ we used all possible pairs of tangent vectors and in the definition of ${\cal F}$ we used all possible pairs of one-forms which give non-zero functions \eqref{phi-omom}. Therefore no pair of vectors and no pair of one-forms is distinguished by the definitions which is a necessary condition for the construction of projective quantum states to be background independent.   

\paragraph{Discrete frames} Let $N>0$ be a natural number. A discrete frame $\gamma$ is a  collection of $3N$ vectors tangent to $\Sigma$---   
\[
\gamma=\bigcup_{I=1}^N\{ \ e_{I1},e_{I2},e_{I3}\ \}
\]  
such that 
\begin{enumerate}
\item for each $I\in\{1,\ldots,N\}$ the triplet $(e_{I1},e_{I2},e_{I3})$ is a basis of $T_{y_I}\Sigma$,
\item the points $\{y_1,\ldots,y_N\}$ are pairwise distinct.
\end{enumerate}
The points $\{y_1,\ldots,y_N\}$ will be called {\em points underlying the frame} $\gamma$ and the number of the points will be denoted by $N_\gamma$. 
    
The set $\Gamma$ of all discrete frames in $\Sigma$ is a directed set with a relation $\geq$ defined as follows: $\gamma'\geq \gamma$ if the set of points underlying $\gamma$ is a subset of the set of points underlying $\gamma'$.        

\paragraph{Finite sets of configurational d.o.f.} Let $\gamma$ be a discrete frame. Each pair of vectors $Y,Y'\in \gamma$ tangent to $\Sigma$ at the same point defines a d.o.f. $\kappa_{YY'}$. The set of all such d.o.f. will be denoted by $K_\gamma$. Clearly, the set $K_\gamma$ consists of $6N_\gamma$ elements. There holds the following obvious lemma: 
\begin{lm}
Let $K_\gamma=\{\kappa_1,\ldots,\kappa_{6N}\}$. For every $(x_1,\ldots,x_{6N})\in \R^{6N}$ there exists $q\in Q$ such that
\[
\kappa_\alpha(q)=x_\alpha.
\] 
\label{Kg-R6N}
\end{lm}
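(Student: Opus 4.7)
The plan is to construct $q$ by prescribing its value independently at each point $y_I$ underlying $\gamma$ and then gluing these pointwise choices together using bump functions. The key observation is that at a single point $y_I$, the six d.o.f.\ in $K_\gamma$ supported at $y_I$, namely $\kappa_{e_{Ii} e_{Ij}}$ for $i\le j$, correspond exactly to the six independent components of a symmetric bilinear form on $T_{y_I}\Sigma$ in the basis $(e_{I1}, e_{I2}, e_{I3})$.

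First I would fix $I\in\{1,\ldots,N\}$ and consider the space $S_I$ of symmetric bilinear forms on $T_{y_I}\Sigma$; this is a six-dimensional real vector space. The evaluation map
\[
\mathrm{ev}_I : S_I \to \R^6, \qquad q_I \mapsto \big(q_I(e_{Ii}, e_{Ij})\big)_{i\le j}
\]
is linear and injective (since $(e_{I1}, e_{I2}, e_{I3})$ is a basis of $T_{y_I}\Sigma$), hence a linear isomorphism. Therefore, given the six real numbers among $(x_1,\ldots,x_{6N})$ prescribing the values $\kappa_{e_{Ii} e_{Ij}}(q)$ for $i\le j$, one can solve for a unique $q_I\in S_I$ realizing them. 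Explicitly, if $(e^i_I)$ denotes the dual basis of $T^*_{y_I}\Sigma$, one can write $q_I$ as a linear combination of the symmetrized products $e^i_I \otimes_s e^j_I$ with coefficients determined by the prescribed values.

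Next I would globalize these local choices. Because the points $y_1,\ldots,y_N$ are pairwise distinct, one can choose pairwise disjoint open neighborhoods $U_I\ni y_I$, each equipped with local coordinates. Using the coordinates, extend $q_I$ to a (smooth) symmetric tensor field on $U_I$ by taking its components in the coordinate frame to be constant and equal to those of $q_I$ at $y_I$. Multiply this local extension by a bump function $\chi_I$ supported in $U_I$ with $\chi_I(y_I)=1$ to obtain a compactly supported symmetric $\binom{0}{2}$-tensor field $\tilde q_I$ on $\Sigma$ satisfying $\tilde q_I(y_I)=q_I$. Define
\[
q := \sum_{I=1}^{N} \tilde q_I.
\]
Since the supports are disjoint, $q(y_I)=\tilde q_I(y_I)=q_I$ for every $I$, so $\kappa_{e_{Ii} e_{Ij}}(q) = q_I(e_{Ii}, e_{Ij})$ equals the prescribed value, as required.

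There is essentially no serious obstacle; the only point that deserves attention is verifying that the $6N$ d.o.f.\ in $K_\gamma$ split cleanly into $N$ groups of six, each group depending only on the value of $q$ at a single point, so the local problems are truly decoupled. This is immediate from the definition of $\kappa_{YY'}$ as pointwise evaluation and the assumption that the points $y_I$ are pairwise distinct.
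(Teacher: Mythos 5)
Your proof is correct, and it spells out precisely the argument the paper leaves implicit --- the paper states Lemma \ref{Kg-R6N} without proof, calling it ``obvious.'' The two key points you identify (the evaluation map on symmetric bilinear forms at each $y_I$ being a linear isomorphism onto $\R^6$, and the pointwise problems decoupling because the $y_I$ are distinct, so local solutions can be glued with bump functions) are exactly what makes the statement obvious, and your write-up is a valid justification.
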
     
\noindent This means that the map $\tilde{K}_\gamma$ is a bijection, 
\begin{equation}
Q_{K_\gamma}\cong \R^{6N_\gamma},
\label{QKg-R6N}
\end{equation}
$K_\gamma$ is a set of independent d.o.f. and $Q_{K_\gamma}$ is a reduced configuration space.

\begin{lm}
For every finite set $K$ of configurational d.o.f. there exists a discrete frame $\gamma$ such that each d.o.f. in $K$ is a linear combination of d.o.f. in $K_\gamma$.   
\label{K-linKg}
\end{lm}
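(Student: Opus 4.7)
The plan is to construct the required frame $\gamma$ by collecting, for each element of $K$, the point at which it is supported, and then attaching an arbitrary basis of the tangent space at each such point. The desired linear-combination property will then follow immediately from the bilinearity of $q$ in its arguments.

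More concretely, write the given finite set as $K=\{\kappa_{Y_1Y'_1},\ldots,\kappa_{Y_nY'_n}\}$, where each $\kappa_{Y_iY'_i}$ is supported at some point $z_i\in\Sigma$ (i.e.\ $Y_i,Y'_i\in T_{z_i}\Sigma$). Let $\{y_1,\ldots,y_N\}$ be the list of \emph{distinct} points appearing among $\{z_1,\ldots,z_n\}$. At each $y_I$ choose, arbitrarily, a basis $(e_{I1},e_{I2},e_{I3})$ of $T_{y_I}\Sigma$. The collection
\[
\gamma:=\bigcup_{I=1}^N\{e_{I1},e_{I2},e_{I3}\}
\]
is, by construction, a discrete frame in the sense defined above, since the points $y_I$ are pairwise distinct and each triplet is a basis at its point.

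It remains to express each $\kappa_{Y_iY'_i}\in K$ as a linear combination of elements of $K_\gamma$. Pick the index $I$ with $z_i=y_I$ and expand
\[
Y_i=\sum_{a=1}^3 c^a\, e_{Ia},\qquad Y'_i=\sum_{b=1}^3 d^b\, e_{Ib}
\]
in the chosen basis. Since $q\in Q$ is a $\binom{0}{2}$-tensor field and hence $q(\cdot,\cdot)$ is bilinear on $T_{y_I}\Sigma$, we get for every $q\in Q$
\[
\kappa_{Y_iY'_i}(q)=q(Y_i,Y'_i)=\sum_{a,b=1}^3 c^a d^b\,q(e_{Ia},e_{Ib})=\sum_{a,b=1}^3 c^a d^b\,\kappa_{e_{Ia}e_{Ib}}(q),
\]
so $\kappa_{Y_iY'_i}=\sum_{a,b}c^a d^b\,\kappa_{e_{Ia}e_{Ib}}$, and each $\kappa_{e_{Ia}e_{Ib}}$ belongs to $K_\gamma$ by the definition of $K_\gamma$.

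There is no real obstacle here; the only thing to be careful about is that distinct elements of $K$ may share a base point, which is precisely why one passes first to the set of distinct supporting points before choosing bases. Bilinearity of $q$ then does the work, and the argument does not use anything beyond the definitions of $\kappa_{YY'}$, $K_\gamma$, and a discrete frame.
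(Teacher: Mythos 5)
Your proof is correct and follows essentially the same route as the paper's: collect the distinct base points of the d.o.f.\ in $K$, take any discrete frame $\gamma$ with exactly those underlying points, expand each $Y,Y'$ in the corresponding basis, and invoke bilinearity of $q$. The only cosmetic difference is that your sum over all nine index pairs $(a,b)$ collapses, via $\kappa_{e_{Ia}e_{Ib}}=\kappa_{e_{Ib}e_{Ia}}$, onto the six distinct elements of $K_\gamma$, which does not affect the conclusion.
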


\begin{proof}
There exists a finite set of points $\{y_1,\ldots,y_N \}$ of $\Sigma$ such that each $\kappa\in K$ is defined by a pair of vectors tangent to $\Sigma$ at one of these points. Let $\gamma$ be any discrete frame such that its underlying points are $\{y_1,\ldots,y_N \}$. Let $\kappa_{YY'}\in K$, where $Y,Y'\in T_{y_I}\Sigma$ and let $(e_{1},e_{2},e_{3})$ be the basis of $T_{y_I}\Sigma$ belonging to $\gamma$. Then both $Y$ and $Y'$ are linear combinations of the vectors $(e_{1},e_{2},e_{3})$ and consequently $\kappa_{YY'}$ is a linear combination of d.o.f. $\{\kappa_{e_{i}e_{j}}\}$.              
\end{proof}

\begin{lm}
$\gamma'\geq\gamma$ if and only if every d.o.f. in $K_\gamma$ is a linear combination of d.o.f. in $K_{\gamma'}$.  
\label{lm-lin-comb}
\end{lm}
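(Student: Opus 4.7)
My plan is to prove the two implications separately, with the nontrivial work concentrated in the ``only if'' direction via a support-based separation argument.

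\textbf{Forward direction ($\gamma'\geq\gamma \Rightarrow$ linear combination).} Suppose $\gamma'\geq\gamma$, so every point $y$ underlying $\gamma$ is also underlying $\gamma'$. A generator $\kappa_{YY'}\in K_\gamma$ is defined by a pair $Y,Y'\in T_y\Sigma$, and since $y$ is underlying $\gamma'$, the frame $\gamma'$ contains a basis $(e'_1,e'_2,e'_3)$ of $T_y\Sigma$. Writing $Y=\sum_i a_i e'_i$ and $Y'=\sum_j b_j e'_j$, the bilinearity of $q$ at $y$ in its two arguments gives, for every $q\in Q$,
\[
\kappa_{YY'}(q)=q(Y,Y')=\sum_{i,j}a_i b_j\, q(e'_i,e'_j)=\sum_{i,j}a_i b_j\,\kappa_{e'_i e'_j}(q),
\]
so $\kappa_{YY'}$ is a linear combination of elements of $K_{\gamma'}$.

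\textbf{Backward direction (linear combination $\Rightarrow\gamma'\geq\gamma$).} I would argue by contradiction. Suppose every d.o.f.\ in $K_\gamma$ is a linear combination of d.o.f.\ in $K_{\gamma'}$, yet there exists a point $y$ underlying $\gamma$ which is not underlying $\gamma'$. Let $(e_1,e_2,e_3)$ be the basis of $T_y\Sigma$ belonging to $\gamma$, so $\kappa_{e_1 e_1}\in K_\gamma$. By assumption,
\[
\kappa_{e_1 e_1}=\sum_{\beta}c_\beta\,\kappa_{Z_\beta Z'_\beta},
\]
where each pair $(Z_\beta,Z'_\beta)$ is a pair of vectors from $\gamma'$ tangent at some common point $z_\beta$, and $z_\beta\neq y$ since $y$ is not underlying $\gamma'$. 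The idea is to separate $y$ from all the $z_\beta$ by a test tensor field $q\in Q$. Since $\Sigma$ is Hausdorff and the set of underlying points of $\gamma'$ is finite, pick a coordinate chart around $y$ whose domain $U$ excludes every $z_\beta$. Using a bump function supported in $U$, construct a symmetric $\binom{0}{2}$-tensor field $q$ on $\Sigma$ which vanishes outside $U$ and whose value at $y$ satisfies $q_y(e_1,e_1)=1$ (e.g.\ take $q=\chi\cdot q_0$, where $\chi$ is a bump function with $\chi(y)=1$ and $q_0$ is a constant-coefficient symmetric tensor in the chosen coordinates with the required value).

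For this $q$ one has $\kappa_{Z_\beta Z'_\beta}(q)=q_{z_\beta}(Z_\beta,Z'_\beta)=0$ for every $\beta$ because $z_\beta\notin\mathrm{supp}\,q$, whereas $\kappa_{e_1 e_1}(q)=1$. Substituting into the linear identity gives $1=0$, a contradiction. Hence every underlying point of $\gamma$ must be an underlying point of $\gamma'$, i.e.\ $\gamma'\geq\gamma$.

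The main obstacle is the very mild one of exhibiting the separating tensor field $q$; once one checks that $Q$ is large enough to contain a smooth symmetric $\binom{0}{2}$-field with arbitrarily prescribed value at a chosen point and support in a prescribed neighborhood of that point, both directions fall out cleanly. Everything else is bilinearity of $q$ in its two tangent vector arguments.
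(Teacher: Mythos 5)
Your proof is correct and follows essentially the same route as the paper: the forward direction is the identical bilinearity expansion in the basis of $\gamma'$ at each underlying point, and the backward direction rests on the same key fact that values of $q$ at distinct points are independent. The only cosmetic difference is that you derive the contradiction by evaluating on a single bump-localized test field at $y$, whereas the paper argues that all coefficients $A^\alpha$ must vanish and then contradicts $\kappa_{YY'}\neq 0$; both hinge on the same separation property.
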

\begin{proof}
Assume that $\gamma'\geq\gamma$. Let $\kappa_{YY'}\in K_\gamma$. Then $Y,Y'\in T_y\Sigma$ and $y$ is a point underlying the frame $\gamma'$. Let $(e'_1,e'_2,e'_3)$ be a basis of $T_y\Sigma$ belonging to $\gamma'$. Then both $Y$ and $Y'$ is a linear combination of $(e'_1,e'_2,e'_3)$ and consequently, $\kappa_{YY'}$ is a linear combination of d.o.f. in $K_{\gamma'}$ defined by the basis.               

Assume now that every d.o.f. in $K_\gamma$ is a linear combination of d.o.f. in $K_{\gamma'}$. Thus if $\kappa_{YY'}\in K_\gamma$ and $K_{\gamma'}=\{\kappa'_\alpha\}$ then
\[
\kappa_{YY'}=A^\alpha\kappa'_\alpha.
\]
Suppose that $Y,Y'\in T_y\Sigma$. Then if no $\kappa'_\alpha\in K_{\gamma'}$ is supported at $y$ then all the coefficients $\{A^\alpha\}$ must be equal to zero. But this cannot be the case since $\kappa_{YY'}$ is a non-zero function. Thus $y$ is one of the points underlying $\gamma'$. This means that  $\gamma'\geq \gamma$.           
\end{proof}

Let $\mathbf{K}$ be the set of all finite sets of independent d.o.f., and let $\mathbf{K}_\Gamma$ be its subset given by all discrete frames:
\[
\mathbf{K}_\Gamma:=\{ \ K_\gamma\ | \ \gamma\in\Gamma\ \}.
\]
By virtue of Lemmas \ref{Kg-R6N} and \ref{K-linKg} the set $\mathbf{K}_\Gamma$ satisfies the requirements imposed on a set $\mathbf{K}'$ by the following proposition \cite{q-nonl}:    

\begin{pro}
Suppose that there exists a subset $\mathbf{K}'$ of $\mathbf{K}$ such that  for every finite set $K_0$ of configurational elementary d.o.f. there exists $K'_0\in\mathbf{K}'$ satisfying the following conditions: 
\begin{enumerate}
\item the map $\tilde{K}'_0$ is a bijection; 
\item each d.o.f. in $K_0$ is a linear combination of d.o.f. in $K'_0$.  
\end{enumerate} 
Then
\begin{enumerate}
\item for every set $K\in\mathbf{K}$ the map $\tilde{K}$ is a bijection. Consequently, $Q_K\cong \R^N$ with $N$ being the number of elements of $K$ and the map $\tilde{K}$ defines a linear structure on $Q_K$ being the pull-back of the linear structure on $\R^N$; if $Q_{K}=Q_{K'}$ for some other set $K'\in\mathbf{K}$ then the linear structures defined on the space by $\tilde{K}$ and $\tilde{K}'$ coincide.
\item for every element $\Psi\in\Cyl$ there exists a set $K\in\mathbf{K}'$ such that $\Psi$ is compatible with $K$.    
\end{enumerate}
\label{big-pro}
\end{pro}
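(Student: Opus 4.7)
Both assertions proceed by passing, via the hypothesis, to a common refiner $K'_0 \in \mathbf{K}'$ of the sets of d.o.f.\ in question, and exploiting two structural facts. First, whenever each $\kappa \in K$ is a linear combination of elements of $K'_0$, the equivalence relation $\sim_{K'_0}$ refines $\sim_K$, so there is a natural \emph{surjective} projection $\pi_{K, K'_0}: Q_{K'_0} \to Q_K$ with $\pr_K = \pi_{K, K'_0} \circ \pr_{K'_0}$. Second, in the global coordinates on $Q_{K'_0}$ supplied by the bijection $\tilde K'_0$, the identity $\tilde K \circ \pi_{K, K'_0} = A \circ \tilde K'_0$ holds, where $A: \R^{N'} \to \R^N$ is the linear map whose rows are the coefficients that express each $\kappa_\alpha \in K$ as a combination of elements of $K'_0$.

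\textbf{First assertion.} Given $K \in \mathbf{K}$, apply the hypothesis to $K_0 := K$ to obtain such a $K'_0$. Because $\pi_{K, K'_0}$ is surjective and $\tilde K'_0$ is a bijection, the image of $\tilde K$ coincides with the image of $A$, a linear subspace of $\R^N$. But $K \in \mathbf{K}$ forces this image to be open in $\R^N$, and an open linear subspace must be all of $\R^N$; together with the built-in injectivity of $\tilde K$, this gives $\tilde K$ a bijection and equips $Q_K$ with the pull-back linear structure. For uniqueness, given $K, K' \in \mathbf{K}$ with $Q_K = Q_{K'}$ (equivalently $\sim_K = \sim_{K'}$), apply the hypothesis to $K_0 := K \cup K'$ to produce a common refiner $K'_0 \in \mathbf{K}'$ with associated linear maps $A: \R^{N'_0} \to \R^N$ and $A': \R^{N'_0} \to \R^{N'}$. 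The equality of equivalence relations identifies $\pi_{K, K'_0}$ with $\pi_{K', K'_0}$, and chasing the two commuting squares yields $\tilde K' \circ \tilde K^{-1} \circ A = A'$. Since $A$ is surjective (by the preceding paragraph applied to $K$), this identity uniquely and consistently determines $\tilde K' \circ \tilde K^{-1}$ as a linear bijection of $\R^N$ onto $\R^{N'}$, which in particular forces $N = N'$ and shows that the pull-back linear structures agree.

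\textbf{Second assertion.} Any $\Psi \in \Cyl$ is a finite sum $\Psi = \sum_i \pr_{K_i}^* \psi_i$ with each $K_i \in \mathbf{K}$ and $\psi_i: Q_{K_i} \to \C$ smooth. Apply the hypothesis to $K_0 := \bigcup_i K_i$ to obtain $K \in \mathbf{K}'$ such that every element of every $K_i$ is a linear combination of elements of $K$. Each projection $\pi_{K_i, K}: Q_K \to Q_{K_i}$ is then well defined and, in the coordinates furnished by $\tilde K$ and $\tilde K_i$, is given by the linear map $A_i$, hence smooth. Therefore $\pr_{K_i}^* \psi_i = \pr_K^* (\psi_i \circ \pi_{K_i, K})$ is a cylindrical function compatible with $K$, and so is the finite sum $\Psi$.

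\textbf{Main obstacle.} The delicate step is the first assertion: the image of $\tilde K$ is not merely any open subset of $\R^N$, but, once compared with $\tilde K'_0$ through the linear intertwiner $A$ over a common refiner, is forced to be a \emph{linear subspace}; the openness built into membership of $\mathbf{K}$ then promotes this subspace to all of $\R^N$. Everything else---surjectivity of the quotient projections $\pi_{K, K'_0}$, smoothness of the linear coordinate maps between reduced configuration spaces, and compatibility of $\pr_K^*$ with finite sums---is routine bookkeeping once this central linear-algebraic diagram is in place.
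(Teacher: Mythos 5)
Your argument is correct and self-contained. Note, however, that the paper does not actually prove this proposition: it is imported verbatim from reference [q-nonl] (Oko{\l}\'ow 2013) and used as a black box, so there is no in-text proof to compare against. Judged on its own merits, your proof holds up: the key observation that, over a common refiner $K'_0$ with $\tilde K'_0$ bijective, the image of $\tilde K$ equals the image of the linear coefficient map $A$ and is therefore a linear subspace --- which openness (built into membership of $\mathbf{K}$) then promotes to all of $\R^N$ --- is exactly the right mechanism, and the diagram chase $\tilde K'\circ\tilde K^{-1}\circ A = A'$ together with surjectivity of $A$ correctly forces the transition map to be a linear bijection, giving coincidence of the pull-back linear structures. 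The second assertion is likewise fine: decomposing $\Psi$ into finitely many cylindrical summands, refining the union of their label sets by a single $K\in\mathbf{K}'$, and pulling each $\psi_i$ back through the linear (hence smooth) projection $\pi_{K_i,K}$ is the natural route, and the only implicit ingredient --- that the charts $\tilde K$ and $\tilde K_i$ are global, so that $\pi_{K_i,K}$ is smooth in these coordinates --- is supplied by your first assertion.
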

\noindent The first assertion of the proposition guarantees in particular that if $Q_K=Q_{K'}$ then the differential structure on $Q_K$ coincides with that on $Q_{K'}$. Therefore the space $\Cyl$ spanned by cylindrical functions on $Q$ is well defined. The second assertion means that every element of $\Cyl$ is a cylindrical function compatible with some $K_\gamma\in\mathbf{K}_\Gamma$.       

\paragraph{Momentum operators} Each $\varphi_{\omega\omega'}\in{\cal F}$ defines an operator on $\Cyl$ as follows: 
\[
\Cyl\ni\Psi\mapsto\hat{\varphi}_{\omega\omega'}\Psi:=\{\varphi_{\omega\omega'},\Psi\},
\]   
where the r.h.s. is a function on $Q$. Let $\hat{\cal F}$ denote a real linear space spanned by all such operators:
\[
\hat{\cal F}={\rm span}_\R\{\ \hat{\varphi}_{\omega\omega'}\ | \ \varphi_{\omega\omega'}\in{\cal F} \  \}.
\]   
We know already that if $\Psi\in\Cyl$ then $\Psi=\pr^*_{K_\gamma}\psi$ for some $K_\gamma\in\mathbf{K}_\Gamma$. Let $(x_\alpha)$ be the natural coordinate system on $Q_{K_\gamma}$ given by the map $\tilde{K}_\gamma$ (see Equation \eqref{lin-coor}). Then for any $\hat{\varphi}\in\hat{\cal F}$  
\[
(\hat{\varphi}\Psi)(q)=\sum_\alpha\frac{\partial\psi}{\partial x_\alpha}(\kappa_\beta(q))(\hat{\varphi}\kappa_\alpha)(q)=\Big(\sum_\alpha\big(\pr^*_{K_\gamma}\partial_{x^\alpha}\psi\big)\hat{\varphi}\kappa_\alpha\Big)(q)
\]   
hence
\begin{equation}
\hat{\varphi}\Psi=\sum_\alpha\big(\pr^*_{K_\gamma}\partial_{x^\alpha}\psi\big)\hat{\varphi}\kappa_\alpha
\label{phi-Psi}
\end{equation}
Using \eqref{poiss} it is easy to check that
\begin{equation}
\hat{\varphi}_{\omega\omega'}\kappa_{YY'}=-\frac{1}{2}(\omega(Y)\omega'(Y')+\omega(Y')\omega'(Y))
\label{phi-omom-kyy}
\end{equation}
which is a real constant function on $Q$. This means that for every $\hat{\varphi}\in\hat{\cal F}$ and for every d.o.f. $\kappa$ the function $\hat{\varphi}\kappa$ is a {\em real constant} function on $Q$. Taking into account the formula \eqref{phi-Psi} we see that each $\hat{\varphi}\Psi$ is again a cylindrical function compatible with the same $K_\gamma$ as $\Psi$ is. Thus all the operators in $\hat{\cal F}$ preserve the space $\Cyl$.    

\begin{lm}
Let $\gamma$ be a discrete frame and $K_\gamma=\{\kappa_1,\ldots,\kappa_{N}\}$. Then there exists a set $(\hat{\varphi}_1,\ldots,\hat{\varphi}_N)\subset \hat{\cal F}$ such that
\[
\hat{\varphi}_\beta\kappa_\alpha=\delta_{\beta\alpha}.
\] 
\label{phi-k-delta}
\end{lm}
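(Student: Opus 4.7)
The plan is to construct, for each element of $K_\gamma$, an explicit element of $\hat{\cal F}$ that is dual to it under the action $\hat{\varphi}\kappa$. The idea is to use a ``dual frame'' of one-forms. Since $N=6N_\gamma$ (with six independent pairs $(i,j)$, $i\le j$, per point), I will index the d.o.f. of $K_\gamma$ as $\kappa_{e_{Ii}e_{Ij}}$ for $I\in\{1,\dots,N_\gamma\}$ and $1\le i\le j\le 3$, and build a matching family $\hat{\varphi}^{(I,i,j)}$.

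First, for each underlying point $y_I$ I would pick one-forms $\omega^1_I,\omega^2_I,\omega^3_I$ on $\Sigma$, each of compact support, such that at $y_I$ they form the basis of $T^*_{y_I}\Sigma$ dual to $(e_{I1},e_{I2},e_{I3})$, i.e.\ $\omega^i_I(e_{Ij})=\delta^i_j$, while $\omega^i_I(e_{Jk})=0$ for every $J\neq I$. This ``localisation'' is the only technical point, and it is immediate using bump functions on $\Sigma$: start with any one-form realising the prescribed dual values in a coordinate chart around $y_I$ and multiply by a smooth scalar which equals $1$ near $y_I$, vanishes in neighbourhoods of all other $y_J$, and has compact support. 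Because the points $\{y_1,\dots,y_{N_\gamma}\}$ are pairwise distinct and finite in number, such a scalar clearly exists.

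Next I would invoke the key computational formula \eqref{phi-omom-kyy}: for the pair $(\omega^i_I,\omega^j_I)$ and a generator $\kappa_{e_{Jk}e_{Jl}}\in K_\gamma$, the evaluation
\[
\hat{\varphi}_{\omega^i_I\omega^j_I}\kappa_{e_{Jk}e_{Jl}}=-\tfrac{1}{2}\bigl(\omega^i_I(e_{Jk})\omega^j_I(e_{Jl})+\omega^i_I(e_{Jl})\omega^j_I(e_{Jk})\bigr)
\]
vanishes whenever $J\neq I$ (since then both $\omega^i_I$ and $\omega^j_I$ vanish at $y_J$), and for $J=I$ it equals $-\tfrac{1}{2}(\delta^i_k\delta^j_l+\delta^i_l\delta^j_k)$. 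Restricting to $k\le l$, this gives exactly $-1$ when $(i,j)=(k,l)$ with $i=j$, exactly $-\tfrac12$ when $(i,j)=(k,l)$ with $i<j$, and $0$ otherwise. Thus, indexing both rows and columns by the six ordered pairs $(i,j)$ with $i\le j$ at the single point $y_I$, the matrix of $\hat{\varphi}_{\omega^i_I\omega^j_I}$ acting on $\kappa_{e_{Ik}e_{Il}}$ is diagonal with entries $-1$ (for $i=j$) and $-\tfrac12$ (for $i<j$).

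Finally, I would rescale: define
\[
\hat{\varphi}^{(I,i,j)}:=\begin{cases}-\hat{\varphi}_{\omega^i_I\omega^i_I} & \text{if } i=j,\\ -2\,\hat{\varphi}_{\omega^i_I\omega^j_I} & \text{if } i<j,\end{cases}
\]
all of which lie in $\hat{\cal F}$. By the computation just described these operators satisfy $\hat{\varphi}^{(I,i,j)}\kappa_{e_{Jk}e_{Jl}}=\delta_{IJ}\,\delta_{(i,j),(k,l)}$, which after relabelling gives the desired sequence $(\hat{\varphi}_1,\dots,\hat{\varphi}_N)$ with $\hat{\varphi}_\beta\kappa_\alpha=\delta_{\beta\alpha}$. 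The only real ``obstacle'' is the existence of the localised dual one-forms in the first step, and that obstacle is standard manifold bookkeeping.
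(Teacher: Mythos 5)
Your proof is correct and follows essentially the same route as the paper: localised one-forms dual to the frame at each underlying point (the paper phrases the localisation as requiring $y_I$ to be the only underlying point in the supports of $\omega^1_I,\omega^2_I,\omega^3_I$), the evaluation formula \eqref{phi-omom-kyy}, and a final rescaling. You merely make explicit the rescaling constants ($-1$ and $-2$) that the paper leaves as ``rescaling appropriately.''
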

\begin{proof}
Let $\{y_1,\ldots,y_N\}$ be a set of points underlying the frame $\gamma$. Given point $y_I$ in the set it is possible to choose three one-forms $(\omega^1_I,\omega^2_I,\omega^3_I)$ such that $(i)$ for the basis $(e_{I1},e_{I2},e_{I3})$ of $T_{y_I}\Sigma$ belonging to $\gamma$
\[
\omega^i_I(e_{Ij})=\delta^i_j
\]    
and $(ii)$ $y_I$ is the only underlying point of $\gamma$ which belongs to the supports of all $(\omega^1_I,\omega^2_I,\omega^3_I)$. Then by virtue of \eqref{phi-omom-kyy}
\[
\hat{\varphi}_{\omega^i_I\omega^j_I}\kappa_{e_{Jk}e_{Jl}}=-\frac{1}{2}\big(\omega^i_I(e_{Jk})\omega^j_I(e_{Jl})+\omega^i_I(e_{Jl})\omega^j_I(e_{Jk})\big)=-\frac{1}{2}\delta_{IJ}(\delta^i{}_k\delta^j{}_l+\delta^i{}_l\delta^j{}_k).
\]   
Rescaling appropriately the operators $\{\hat{\varphi}_{\omega^i_I\omega^j_I}\}$ one obtains the desired operators $\{\hat{\varphi}_\beta\}$.    
\end{proof}

\paragraph{A directed set $(\Lambda,\geq)$} Recall that in Section \ref{u-prop} we denoted by $\hat{\mathbf{F}}$ the set of all finite dimensional linear subspaces of $\hat{\cal F}$. Using pairs $(\hat{F},K_\gamma)\in\hat{\mathbf{F}}\times \mathbf{K}_\Gamma$ we define a set $\Lambda$ and a relation $\geq$ on it exactly as it was done in Section \ref{u-prop}.  

This set $(\Lambda,\geq)$ is a directed set. To justify this statement it is enough to show that the sets $\mathbf{K}_\Gamma$ and $\Lambda$ defined in the present section satisfy all assumptions of Proposition \ref{Lambda-dir}. It is clear that the set $\mathbf{K}_\Gamma$ satisfies Assumption \ref{gamma>N} of the proposition. This set meets Assumption \ref{lin-comb-Kg} by virtue of Lemma \ref{lm-lin-comb}.   On the other hand Lemma \ref{phi-k-delta} guarantees that Assumption \ref{F-Kg-ndeg} is satisfied. Moreover, it follows from Lemmas \ref{K-linKg} and \ref{phi-k-delta} that the set $(\Lambda,\geq)$ satisfies Condition \ref{k-Lambda} (Section \ref{ffH-sec}), and Equation \eqref{phi-Psi} means that the set meets Condition \ref{comp-f}. We already concluded that for every $\hat{\varphi}\in\hat{\cal F}$ and $\kappa\in{\cal K}$ the function $\hat{\varphi}\kappa$ is real and constant as it is required by Condition \ref{const}.

Now let us show that the directed set $(\Lambda,\geq)$ just constructed satisfies all remaining Conditions listed in Section \ref{ffH-sec}. 

Let us start with Condition \ref{f-Lambda}. Consider a set $F_0=\{\varphi_{\omega_1\omega'_1},\ldots,\varphi_{\omega_N\omega'_N}\}\subset{\cal F}$. Let us fix $I\in\{1,\ldots,N\}$ and consider the momentum d.o.f. $\varphi_{\omega_I\omega'_I}\in F_0$. Since this d.o.f. is non-zero there exists a discrete frame $\gamma_I=\{e_1,e_2,e_3\}$ consisting of a basis of some $T_y\Sigma$ such that the following sextuplet
\[
\{\hat{\varphi}_{\omega_I\omega'_I}\kappa_{e_ie_j}\}_{i\leq j}
\]
contains at least one non-zero number. Lemma \ref{phi-k-delta} guarantees that there exists operators $\{\hat{\varphi}_1,\ldots,\hat{\varphi}_5\}$ such that
\[
\hat{F}_I:={\rm span}_{\R}\{\hat{\varphi}_{\omega_I\omega'_I},\hat{\varphi}_1,\ldots,\hat{\varphi}_5\}\in\hat{\mathbf{F}}
\]
and $K_{\gamma_I}$ form a non-degenerate pair $\lambda_I=(\hat{F}_I,K_{\gamma_I})$. Since $\Lambda$ is a directed set there exists $\lambda=(\hat{F},K_\gamma)\in \Lambda$ such that $\lambda\geq \lambda_I$ for every $I=1,\ldots,N$. Taking into account  the definition of the relation $\geq$ on $\Lambda$ we see that $\hat{F}$ contains all the operators $\{\hat{\varphi}_{\omega_1\omega'_1},\ldots,\hat{\varphi}_{\omega_N\omega'_N}\}$. Thus Condition \ref{f-Lambda} is satisfied.          

Condition \ref{RN} is ensured by Lemma \ref{Kg-R6N}.  Condition \ref{non-deg} is satisfied by virtue of the definition of the set $\Lambda$ presented in Section \ref{u-prop}.

Regarding Condition \ref{Q'=Q} we note first that, given $K_{\gamma},K_{\gamma'}$, there exists $K_{\gamma''}$ such that each d.o.f. in $K_{\gamma}\cup K_{\gamma'}$ is a linear combination of d.o.f. in $K_{\gamma''}$ (see Lemma \ref{K-linKg}). Suppose that $Q_{K_\gamma}=Q_{K_{\gamma'}}$. Then Equation \eqref{QKg-R6N} applied to $K_{\gamma''}$ allows us to use the following proposition \cite{q-nonl}: 

\begin{pro}
Let $K,K'$ be sets of independent d.o.f. of $N$ and $N'$ elements respectively such that $Q_K=Q_{K'}$. Suppose that there exists a set $\bar{K}$ of independent d.o.f. of $\bar{N}$ elements such that the image of $\tilde{\bar{K}}$ is $\R^{\bar{N}}$ and  each d.o.f. in $K\cup K'$ is a linear combination of d.o.f. in $\bar{K}$. Then each d.o.f. in $K$ is a linear combination of d.o.f. in $K'$.  
\end{pro}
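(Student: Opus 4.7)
The plan is to reduce the statement to an assertion about two linear maps on $\R^{\bar N}$ having the same kernel, and then to invoke annihilator duality in finite dimensions.

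First I push every d.o.f.\ down to $\R^{\bar N}$ via $\tilde{\bar K}\circ \pr_{\bar K}$. Because $\tilde{\bar K}$ is a bijection, each $\bar\kappa_i$ descends to the $i$-th coordinate function $x_i$ on $\R^{\bar N}$. The hypothesis that every $\kappa_\alpha\in K$ and $\kappa'_\beta\in K'$ is a linear combination of elements of $\bar K$ then says that they factor through $\tilde{\bar K}\circ \pr_{\bar K}:Q\to\R^{\bar N}$ as linear forms $\sum_i a_{\alpha i}x_i$ and $\sum_i b_{\beta i}x_i$ respectively. Assembling these coefficients yields linear maps $A:\R^{\bar N}\to\R^N$ and $B:\R^{\bar N}\to\R^{N'}$ with $\tilde K\circ \pr_K=A\circ(\tilde{\bar K}\circ \pr_{\bar K})$ and $\tilde{K'}\circ \pr_{K'}=B\circ(\tilde{\bar K}\circ \pr_{\bar K})$.

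The next step is to convert $Q_K=Q_{K'}$ into $\ker A=\ker B$. By construction, $q\sim_K q'$ iff $A$ sends the images of $q$ and $q'$ in $\R^{\bar N}$ to the same point, and similarly for $\sim_{K'}$ and $B$. Crucially, $\tilde{\bar K}\circ \pr_{\bar K}$ is surjective onto $\R^{\bar N}$, since $\pr_{\bar K}$ is always surjective and $\tilde{\bar K}$ is a bijection onto $\R^{\bar N}$ by assumption. Therefore an arbitrary vector $v\in\R^{\bar N}$ can be written as $(\tilde{\bar K}\circ \pr_{\bar K})(q)-(\tilde{\bar K}\circ \pr_{\bar K})(q')$ for some $q,q'\in Q$, and the equality of the equivalence relations $\sim_K$ and $\sim_{K'}$ upgrades to $Av=0 \iff Bv=0$, i.e.\ $\ker A=\ker B$.

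Finally, the row-span of $A$ inside $(\R^{\bar N})^*$ is exactly the annihilator of $\ker A$, and similarly for $B$. In the finite-dimensional space $\R^{\bar N}$ the annihilator operation is an order-reversing bijection between subspaces of $\R^{\bar N}$ and subspaces of $(\R^{\bar N})^*$, so $\ker A=\ker B$ implies that the two row-spans coincide. Hence each row of $A$ is a real linear combination of the rows of $B$; pulled back to $Q$ this is precisely the statement that each $\kappa_\alpha\in K$ is a linear combination of the $\kappa'_\beta\in K'$.

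The main obstacle I anticipate is the surjectivity step bridging $Q_K=Q_{K'}$ and $\ker A=\ker B$: without the assumption that $\tilde{\bar K}(Q_{\bar K})$ fills all of $\R^{\bar N}$, the maps $A$ and $B$ would only need to agree on a possibly curved subset of $\R^{\bar N}$, and the clean passage to equal kernels (and thence to equal row-spans) would break down. Everything else is routine linear algebra.
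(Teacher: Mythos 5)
Your argument is correct. Note that the paper does not actually prove this proposition --- it is imported verbatim from \cite{q-nonl} --- so there is no in-text proof to compare against; judged on its own, your reduction is sound. The one genuinely non-routine step is the one you flag yourself: the hypothesis that the image of $\tilde{\bar{K}}$ is all of $\R^{\bar{N}}$ makes $\tilde{\bar{K}}\circ\pr_{\bar{K}}$ surjective, so every vector of $\R^{\bar{N}}$ arises as a difference of images of points of $Q$, and the coincidence of the partitions $\sim_K$ and $\sim_{K'}$ really does upgrade to $\ker A=\ker B$; from there the identification of the row span with the annihilator of the kernel finishes the job. This is precisely why the proposition carries the hypothesis on $\bar{K}$ rather than merely assuming $K\cup K'$ sits inside some common independent set, and it is why the paper must first invoke Lemma 3.3 and Equation (3.3) before applying the proposition.
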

\noindent Thus each d.o.f in $K_\gamma$ is a linear combination of d.o.f. in $K_{\gamma'}$. Then, as stated by Lemma \ref{lm-lin-comb}, $\gamma'\geq \gamma$ and Condition \ref{Q'=Q} follows.  

By virtue of the definition of the relation $\geq$ on $\Lambda$ (see Section \ref{u-prop}) $K_{\gamma'}\geq K_\gamma$ if $\gamma'\geq\gamma$. Thus Condition \ref{lin-comb} follows from Lemma \ref{lm-lin-comb}. Condition \ref{FF'} is satisfied  due to the same definition. 

We conclude that the directed set $(\Lambda,\geq)$ constructed in the present section for a theory of symmetric tensor fields of type $\binom{0}{2}$  satisfies all Conditions listed in Section \ref{ffH-sec} which means that for such a theory there exists the corresponding space $\S$ of projective quantum states. Let us emphasize that the directed set is built in a background independent manner hence the same can be said about the resulting space $\S$.     

\subsection{Projective quantum states for any tensor field theory}

A space of projective quantum states for any tensor field theory can be constructed in a fully analogous way to the space $\S$ built for a theory of symmetric tensor fields of type $\binom{0}{2}$ described in the previous section. The only things which have to be changed are elementary d.o.f. and, consequently, the definition of a set $K_\gamma$ of d.o.f. associated with a discrete frame $\gamma$ (definition of which remains unchanged).     

Let us consider a tensor field theory of a phase space $P\times Q$. A point $q$ of the configuration space $Q$ is a finite collection $(q^A)$, $A=1,2,\ldots,k$, of tensor fields  defined on a manifold $\Sigma$ such that $q^A$ is a tensor field of type $\binom{m_A}{n_A}$.  Then a point $p$ in the momentum space $P$ of the theory is a collection $(p_A)$ of tensor densities  of weight $1$ on $\Sigma$ such that the momentum $p_A$ conjugate to $q^A$ is a tensor density of type $\binom{n_A}{m_A}$. Given $A$, allowed tensor fields $q^A$ and $p_A$ may be subjected to some (consistent) symmetricity/antysymmetricity conditions.         

Suppose that either $m_A\neq 0$ or $n_A\neq 0$ and denote by $Y_A$ an ordered set of $m_A$ elements of $T_y\Sigma$ and $n_A$ elements of $T^*_y\Sigma$. Then the field $q^A$ can be evaluated on the set $Y_A$ yielding a real number. If $m_A=0=n_A$, that is, if $q^A$ is a function on $\Sigma$ then we denote by $Y_A$ a point $y\in \Sigma$  and evaluate the function $q^A$ at the point obtaining a real number.  In this way the set $Y_A$ defines a real function $\kappa_{Y_A}$ on $Q$. The following set
\[
{\cal K}:=\{\ \kappa_{Y_A}\ | \ Y_A\in (T_y\Sigma)^{m_A}\times(T^*_y\Sigma)^{n_A}, \ y\in\Sigma,\ A=1,2,\ldots, k \ \},
\]         
where $Y_A\in (T_y\Sigma)^{0}\times(T^*_y\Sigma)^{0}$ should be understood as $Y_A=y$, separates points in $Q$ and therefore can serve as a set of elementary configurational d.o.f..     

Assume again that either $m_A\neq 0$ or $n_A\neq 0$ and denote by $\omega^A$ an ordered set of $m_A$ vector fields on $\Sigma$ of compact support and of $n_A$ one-forms on the manifold of compact support. The momentum field $p_A$ contracted with elements of the set $\omega^A$ is a scalar density of weight $1$ and of compact support. This density can be integrated over $\Sigma$ which yields a real number. If $m_A=0=n_A$, that is, if $p_A$ is a scalar density on $\Sigma$ then we denote by $\omega^A$ a function on $\Sigma$ of a compact support. Then the density $\omega^A p_A$ (no summation over $A$ here) once integrated over $\Sigma$ yields a real number.  In this way the set $\omega^A$ defines a real function $\varphi_{\omega^A}$ on $P$.

The set $\cal F$ of all {\em non-zero} functions $\{\varphi_{\omega^A}\}_{A=1,2,\ldots,k}$ separates points in $P$ and can be chosen to be a set of elementary momentum d.o.f..  

Let $D=\dim\Sigma$. A discrete frame $\gamma$ is a  collection of vectors tangent to $\Sigma$   
\[
\gamma=\bigcup_{I=1}^N\{ \ e_{I1},e_{I2},\ldots,e_{ID}\ \}
\]  
such that 
\begin{enumerate}
\item for each $I\in\{1,\ldots, N\}$ the set $(e_{I1},e_{I2},,\ldots,e_{ID})$ is a basis of $T_{y_I}\Sigma$,
\item the points $\{y_1,\ldots,y_N\}$ are pairwise distinct.
\end{enumerate}
The set $\Gamma$ of all discrete frames is a directed set with a directing relation $\geq$ defined as before.

Each discrete frame $\gamma=\{e_{Ii}\}$ defines a set $K_\gamma$ of elementary d.o.f. in the following way. Let $y_I$ be a point underlying the frame $\gamma$ and let $(\theta^{I1},\theta^{I2},\ldots,\theta^{ID})$ be the dual basis to $(e_{I1},e_{I2},,\ldots,e_{ID})$. Given $A\in \{1,2,\ldots,k\}$, we define $Y_A$ using elements of both bases (if $q^A$ is a function then we set $Y_A=y_I$) and then $Y_A$ yields an elementary d.o.f. $\kappa_{Y_A}$. The set $K_\gamma$ is the set of all (pairwise distinct) configurational d.o.f. obtained according to this prescription.    

Carrying on further steps of the construction as it was done in the previous section we obtain a directed set $(\Lambda,\geq)$ for the tensor field theory under consideration which satisfies all Conditions listed in Section \ref{ffH-sec}. Therefore there exists a family of factorized Hilbert spaces labeled by elements of the directed set which provides us with a space $\S$ of projective quantum states for the theory. Clearly, this construction is also background independent.  

\section{Projective quantum states for theories of coupled fields}

\subsection{General construction}

Suppose that there are two field theories $T$ and $\bar{T}$ of phase spaces ${\cal V}$  and $\bar{\cal V}$ respectively and that we have coupled the fields of the theories obtaining thereby a new theory of a phase space ${\cal V}\times\bar{\cal V}$. Assume moreover that for both theories $T$ and $\bar{T}$ we have constructed spaces $\S$ and $\bar{\S}$ of projective quantum states. Can we use these spaces or some objects used to construct them to obtain projective quantum states for the theory of the coupled fields?

Taking into account the interpretation of the Kijowski's construction as a family of consistent experiments (see Section \ref{out}) it is reasonable to define the projective quantum states for the theory of the coupled fields using finite systems obtained from finite systems of $T$ and $\bar{T}$---if $\Lambda$ and $\bar{\Lambda}$ are directed sets of finite systems of, respectively, $T$ and $\bar{T}$, then we may choose a pair $(\lambda,\bar{\la})\in\Lambda\times\bar{\Lambda}$ to be a finite system for the new theory. Then a Hilbert space for this system would be $\h_{\la}\ot\h_{\bar{\la}}$. Can we then extend a family $\{\h_{\la}\ot\h_{\bar{\la}}\}$ to a  family of factorized Hilbert spaces?

The answer to this question is in affirmative. Let 
\begin{align*}
&\Big(\Lambda,\h_{\lambda},\tilde{\h}_{\lambda'\lambda},\Phi_{\lambda'\lambda},\Phi_{\lambda''\lambda'\lambda}\Big), &&\Big(\bar{\Lambda},\h_{\bar{\lambda}},\tilde{\h}_{\bar{\lambda}'\bar{\lambda}},\Phi_{\bar{\lambda}'\bar{\lambda}},\Phi_{\bar{\lambda}''\bar{\lambda}'\bar{\lambda}}\Big).
\end{align*}
be families of factorized Hilbert spaces used to construct the spaces $\S$ and $\bar{\S}$ respectively. Suppose now that $\Theta$ is a directed subset of the directed set $\Lambda\times\bar{\Lambda}$ and define
\begin{align}
  \h_\theta&:=\h_\la\ot\h_{\bar{\la}}, & \tilde{\h}_{\theta'\theta}&:=\tilde{\h}_{\la'\la}\ot\tilde{\h}_{\bar{\la}'\bar{\la}},\nonumber\\
  \Phi_{\theta'\theta}&:=F\circ(\Phi_{\la'\la}\ot\Phi_{\bar{\la}'\bar{\la}}), & \Phi_{\theta''\theta'\theta}&:=F\circ(\Phi_{\la''\la'\la}\ot\Phi_{\bar{\la}''\bar{\la}'\bar{\la}}),\label{phi-theta}
\end{align}
where $\theta=(\la,\bar{\la})$ (and analogously for $\theta'$ and $\theta''$), and
\[
F:\h_1\ot\h_2\ot\h_3\ot\h_4\to \h_1\ot\h_3\ot\h_2\ot\h_4
\]
is a ``flip isomorphism'' defined on simple elements as follows:
\[
F(v_1\ot v_2\ot v_3\ot v_4):= v_1\ot v_3\ot v_2\ot v_4.
\]
Then
\begin{equation}
\Big(\Theta,\h_{\theta},\tilde{\h}_{\theta'\theta},\Phi_{\theta'\theta},\Phi_{\theta''\theta'\theta}\Big)
\label{ffHs-th}
\end{equation}
is a family of factorized Hilbert spaces which provides us with a space of quantum states for the theory of the coupled fields.

To prove that \eqref{ffHs-th} is a family of factorized Hilbert spaces it is enough to show that the maps \eqref{phi-theta} form a commutative diagram analogous to \eqref{diagram}. The commutativity of the diagram can be expressed in the following form
\[
\Phi^{-1}_{\theta''\theta}\circ(\Phi^{-1}_{\theta''\theta'\theta}\ot\id)=\Phi^{-1}_{\theta''\theta'}\circ(\id\ot\Phi^{-1}_{\theta'\theta}).
\]
It is a simple exercise to show that this equality holds for every simple element of the tensor product
\[
\tilde{\h}_{\la''\la'}\ot\tilde{\h}_{\bar{\la}''\bar{\la}'}\ot\tilde{\h}_{\la'\la}\ot\tilde{\h}_{\bar{\la}'\bar{\la}}\ot\h_\la\ot \h_{\bar{\la}}.
\]

The result just obtained means that given two fields theories with corresponding families of factorized Hilbert spaces there are in general many distinct spaces of projective quantum states for a theory of the coupled fields which differ from each other by the choice of the directed set $\Theta$ of finite systems. It may seem that $\Theta=\Lambda\times\bar{\Lambda}$ is a natural choice, but we will argue in the next section that it is not always the case. On the other hand, the set $\Theta$ cannot be ``too small'' because then the resulting space $\S$ may be devoid of some relevant quantum d.o.f.. Therefore it seems safe to require that $\Theta$ is a cofinal directed subset\footnote{A subset $\Lambda'$ of a directed set $\Lambda$ is {\em cofinal} if for every $\lambda\in\Lambda$ there exists $\lambda'\in\Lambda'$ such that $\la'\geq\la$. A cofinal subset $\Lambda'$ of a directed set $\Lambda$ is naturally a directed set with the relation $\geq$ induced by that defined on $\Lambda$.}  of $\Lambda\times\bar{\Lambda}$. This requirement reduces totally the diversity of spaces of projective quantum states for the theory of the coupled fields since  for every two distinct cofinal subsets of $\Lambda\times\bar{\Lambda}$ the resulting spaces coincide---this fact follows directly from general properties of projective limits \cite{bour-sets}.           

\subsection{LQG coupled to tensor fields}

Consider a theory $T$ of some tensor fields defined on a four dimensional manifold $\cal M$. Assuming that ${\cal M}=\R\times\Sigma$, where $\Sigma$ is a three dimensional manifold and treating $\R$ as a ``time-axis'' one may cast the theory into Hamiltonian form. Then a point of a phase space of the theory consists of appropriate fields defined on the manifold $\Sigma$. Let 
\begin{equation}
\Big(\Lambda,\h_{\lambda},\tilde{\h}_{\lambda'\lambda},\Phi_{\lambda'\lambda},\Phi_{\lambda''\lambda'\lambda}\Big)
\label{ffHs-T}
\end{equation}
be a family of factorized Hilbert spaces built over the phase space of the theory $T$ according to the prescription presented in Section \ref{sec-tf}.

On the other hand one may treat the same manifold $\cal M$ as a space-time of General Relativity (GR) by equipping it with a Lorentzian metric $g$ subjected to the vacuum Einstein equations (the manifold can be already equipped with such a metric if it was applied to define the dynamics of $T$). A phase space of GR described in terms of the real Ashtekar-Barbero variables \cite{barb} (being fields on the same manifold $\Sigma$) underlies the construction of LQG. At the same time this phase space is the point of departure for the construction of a family 
\begin{equation}
\Big(\bar{\Lambda},\h_{\bar{\lambda}},\tilde{\h}_{\bar{\lambda}'\bar{\lambda}},\Phi_{\bar{\lambda}'\bar{\lambda}},\Phi_{\bar{\lambda}''\bar{\lambda}'\bar{\lambda}}\Big) 
\label{ffHs-lqg}
\end{equation}
of factorized Hilbert spaces described in \cite{proj-lqg-I} which gives the space $\D_{\rm LQG}$ of projective quantum states for (vacuum) LQG.

Assume now that we have coupled in a way GR with the theory $T$ and that we would like to obtain by a suitable quantization of this new theory a model of LQG coupled to canonical variables of $T$. Then as a space of quantum states for this new quantum model we may use a space obtained from the families \eqref{ffHs-T} and \eqref{ffHs-lqg} by a suitable choice of a directed set $\Theta\subset\Lambda\times\bar{\Lambda}$ of finite systems as described above. The only question we have to answer is how to choose the set $\Theta$?  

To this end let us describe briefly the set $\bar{\Lambda}$ introduced in \cite{proj-lqg-I}. The precise definition of $\bar{\Lambda}$ is complicated but we will not need here all those details. For our purpose it is enough to know that $\bar{\Lambda}$ is a cofinal directed subset of a directed set $\Lambda_{\rm Gra}\times \Lambda_{\rm Sfc}$, where $\Lambda_{\rm Gra}$ is the directed set of (finite) graphs in $\Sigma$ commonly used in LQG (see e.g. \cite{cq-diff}) and $\Lambda_{\rm Sfc}$ is a directed set  elements of which are finite collections of surfaces in $\Sigma$.       

Let us argue now that in the case of LQG and the theory $T$ the choice $\Theta=\Lambda\times\bar{\Lambda}$ is rather not a good one. Suppose then that an element $\lambda=(\hat{F},K_\gamma)\in\Lambda$ and an element $\bar{\lambda}=(\bar{\gamma},\sigma)\in \bar{\Lambda}$ are chosen in such a way that
\begin{enumerate}
\item the set $u$ of points underlying the discrete frame $\gamma$ has an empty intersection with every surface belonging to $\sigma$ and with the graph $\bar{\gamma}$;    
\item the supports of all fields $\{\omega^A\}$, which define operators  constituting a basis of $\hat{F}$ have empty intersections with every surface belonging to $\sigma$ and with the graph $\bar{\gamma}$.    
\end{enumerate}   
This means that $\lambda$ and $\bar{\lambda}$ are supported on disjoint subsets of $\Sigma$ and therefore quantum d.o.f. associated with $\lambda$ and $\bar{\lambda}$ cannot be coupled to each other. Thus it seems reasonable to not include elements $(\lambda,\bar{\lambda})$ of this sort to $\Theta$.     

Taking into account that every graph $\bar{\gamma}$ distinguishes a finite subset of $\Sigma$ consisting of all vertices of the graph it is natural to define the set $\Theta$ as follows: a pair 
\[
\big(\lambda=(\hat{F},\K_\gamma),\bar{\lambda}=(\bar{\gamma},\sigma)\big)\in \Lambda\times\bar{\Lambda}
\]
is an element of $\Theta$ if the set $u$ of points underlying the discrete frame $\gamma$ coincides with the set of all vertices of the graph $\bar{\gamma}$.      
\begin{lm}
$\Theta$ is a directed set.
\end{lm}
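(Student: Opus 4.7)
The relation $\geq$ on $\Theta$ is inherited from the product relation on $\Lambda\times\bar{\Lambda}$, so reflexivity and transitivity are automatic; what must be verified is that any pair $\theta_1=(\lambda_1,\bar{\lambda}_1),\theta_2=(\lambda_2,\bar{\lambda}_2)\in\Theta$ admits a common upper bound in $\Theta$. Write $\lambda_i=(\hat{F}_i,K_{\gamma_i})$ and $\bar{\lambda}_i=(\bar{\gamma}_i,\sigma_i)$, and let $u_i$ denote the points underlying $\gamma_i$; membership in $\Theta$ gives $u_i=V(\bar{\gamma}_i)$, where $V(\cdot)$ is the vertex set of a graph.

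The plan is to fix $\bar{\lambda}_3$ first, made large enough to support a non-degenerate matter system, and then to build $\lambda_3$ over the vertex set $V(\bar{\gamma}_3)$. By Proposition \ref{Lambda-dir} applied to $\Lambda$, pick $(\hat{F}_0,K_{\gamma_0})\in\Lambda$ dominating $\lambda_1,\lambda_2$; set $M:=\dim\hat{F}_0$ and let $u_0$ be the underlying points of $\gamma_0$. Using the directedness of $\bar{\Lambda}$, its cofinality in $\Lambda_{\rm Gra}\times\Lambda_{\rm Sfc}$, and the ability of $\Lambda_{\rm Gra}$ to produce graphs whose vertex set contains any prescribed finite subset of $\Sigma$ and is arbitrarily large, one selects $\bar{\lambda}_3=(\bar{\gamma}_3,\sigma_3)\in\bar{\Lambda}$ satisfying $\bar{\lambda}_3\geq\bar{\lambda}_1,\bar{\lambda}_2$, $V(\bar{\gamma}_3)\supset u_0$, and $|V(\bar{\gamma}_3)|$ large enough to guarantee $|K_\gamma|>M$ for any discrete frame $\gamma$ supported on $V(\bar{\gamma}_3)$; this is possible because each underlying point contributes a fixed positive number of elementary d.o.f.\ to $K_\gamma$. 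The ordering of $\Lambda_{\rm Gra}$, under which $\bar{\gamma}'\geq\bar{\gamma}$ entails $V(\bar{\gamma})\subset V(\bar{\gamma}')$, then forces $V(\bar{\gamma}_3)\supset u_i$ for $i=1,2$.

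Now let $\gamma_3$ be any discrete frame whose underlying points are exactly $V(\bar{\gamma}_3)$, with an arbitrary basis chosen at each point; then $\gamma_3\geq\gamma_0,\gamma_1,\gamma_2$. Non-degeneracy of $(\hat{F}_0,K_{\gamma_0})$ makes $\hat{F}_0$ linearly independent on $K_{\gamma_0}$, and by Assumption \ref{lin-comb-Kg} combined with Lemma \ref{cyl-K} it remains linearly independent on $K_{\gamma_3}$. From this point the matrix-completion argument from the proof of Proposition \ref{Lambda-dir} applies verbatim: bring the matrix of values of a basis of $\hat{F}_0$ against $K_{\gamma_3}$ to the reduced form $(\mathbf{1}\ G')$ by elementary row operations and a reindexing of $K_{\gamma_3}$, then use Lemma \ref{phi-k-delta} to adjoin operators dual (in the Kronecker-delta sense) to the remaining d.o.f., producing $\hat{F}_3\supset\hat{F}_0$ of dimension $|K_{\gamma_3}|$ whose associated matrix $G$ is block-triangular with unit diagonal blocks, hence non-degenerate. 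Consequently $\theta_3:=((\hat{F}_3,K_{\gamma_3}),\bar{\lambda}_3)$ lies in $\Theta$ and dominates both $\theta_1$ and $\theta_2$.

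The main obstacle is the rigid matching constraint $u=V(\bar{\gamma})$ built into $\Theta$: whereas in the proof of Proposition \ref{Lambda-dir} one was free to enlarge the matter frame independently of anything else, here $\gamma_3$ is forced to sit over the vertex set of $\bar{\gamma}_3$, and the non-degeneracy step must be carried out without further modifying the underlying points. The resolution is to enlarge $\bar{\lambda}_3$ first within $\bar{\Lambda}$---exploiting the richness of $\Lambda_{\rm Gra}$---so that $V(\bar{\gamma}_3)$ simultaneously absorbs $u_0$ and is large enough for the completion step to go through; once this geometric side is sufficiently large, the matter side is built by the Proposition \ref{Lambda-dir}-type construction.
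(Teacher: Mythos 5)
Your proof is correct and follows essentially the same route as the paper: the paper establishes that $\Theta$ is cofinal in $\Lambda\times\bar{\Lambda}$ (enlarging the graph so that its vertex set strictly contains the points underlying the matter frame, then building a new discrete frame over that vertex set and extending $\hat{F}$ by the matrix-completion argument from the proof of Proposition \ref{Lambda-dir}), which is exactly the construction you carry out. The only difference is organizational---you verify the upper-bound property for pairs in $\Theta$ directly rather than deducing directedness from cofinality---and this changes nothing of substance.
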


\begin{proof}
To prove the lemma it is enough to show that $\Theta$ is a cofinal subset of the directed set $\Lambda\times\bar{\Lambda}$. 

Consider then an arbitrary $\lambda=(\hat{F},K_\gamma)\in\Lambda$ and an arbitrary $\bar{\lambda}=(\bar{\gamma},\sigma)\in\bar{\Lambda}$ and denote by $u$ the set of points underlying the frame $\gamma$. Let $\bar{\gamma}'$ be a graph such that $u$ is a {\em proper} subset of the set of all vertices of the graph. Since $\bar{\Lambda}$ is a cofinal subset of $\Lambda_{\rm Gra}\times\Lambda_{\rm Sfc}$ there exists an element $\bar{\lambda}''=(\bar{\gamma}'',\sigma'')\in\bar{\Lambda}$ such that $\bar{\lambda}''\geq\bar{\lambda}$ and $\bar{\gamma}''\geq\bar{\gamma}'$. Therefore $u$ is a proper subset of the set of all vertices of $\bar{\gamma}''$.

Let $\gamma'$ be a discrete frame such that  the set $u'$ of points underlying the frame coincides with the set of all vertices of $\bar{\gamma}''$. Then $u$ is a proper subset of $u'$ and consequently $\gamma'\geq\gamma$. 

Let us recall that proving Proposition \ref{Lambda-dir} we considered a space $\hat{F}_0$ and a set $K_{\gamma''}$ of independent d.o.f. such that operators constituting a basis of $\hat{F}_0$ are linearly independent on $K_{\gamma''}$ and the number of the operators is lower than the number of elements of $K_{\gamma''}$. Then we showed that $\hat{F}_0$ can be enlarged to a space $\hat{F}''$ such that $(\hat{F}'',K_{\gamma''})$ belongs to the directed set $\Lambda$ considered in the proposition.

In the same way the space $\hat{F}$ considered here can be enlarged to a space $\hat{F}'$ such that the pair $\lambda':=(\hat{F}',K_{\gamma'})$ belongs to the set $\Lambda$ of finite systems defined for the theory $T$---because $u$ is a proper subset of $u'$ operators constituting a basis of $\hat{F}$ are linearly independent when restricted to $K_{\gamma'}$ and the number of the operators is lower than the number of elements of $K_{\gamma'}$.

Consequently, the pair $(\lambda',\bar{\lambda}'')$ belongs to $\Theta$ and
\[
(\lambda',\bar{\lambda}'')\geq(\lambda,\bar{\lambda})
\]  
which means that $\Theta$ is a cofinal subset of $\Lambda\times\bar{\Lambda}'$.  
\end{proof}

\section{Summary}

In this paper we constructed a space $\S$  of projective quantum states for any tensor field theory. Let us emphasize that this construction is very natural since it applies essential features of canonical variables of such a theory---in this case ''position'' variables are tensor fields and the configurational elementary d.o.f. are defined by evaluating the fields at vectors and covectors (co)tangent to a point, conjugate momenta are tensor densities and momentum elementary d.o.f. are defined as integrals of scalar densities obtained by contracting the momenta with vector fields and one-forms. Thanks to this choice of elementary d.o.f. the space $\S$ is built in a background independent manner. 

Although this space $\S$ can be used for a quantization of a tensor field theory our main goal was to use it for a construction of a space of projective quantum states for LQG coupled to tensor fields.

To this end we considered two theories for which spaces of projective quantum states are known and we showed how this knowledge can be used  to construct a space of projective quantum states for a theory being the result of a coupling of the two original theories. Next, applying this general construction and the space $\D_{\rm LQG}$ of projective quantum states for LQG introduced by Lan\'ery and Thiemann in \cite{proj-lqg-I} we obtained a space of such states for LQG coupled to tensor fields.  

Let us emphasize again that all constructions here are {\em kinematic} in this sense that they do not take into account dynamics of the theories under consideration and  constraints on the phase spaces.

The space of projective quantum states for a tensor field theory was constructed here on the basis of a general method introduced in \cite{q-nonl} and slightly modified in \cite{mod-proj}.  We would like to stress that this paper makes also a contribution to this general method---in Section \ref{u-prop} we described a fairly general scheme for constructing a directed set of finite physical systems from finite sets of configurational d.o.f. and finite dimensional spaces of momentum operators.


\end{document}